\newtheorem{definition}{Definition}
\newtheorem{remark}{Remark}
\newtheorem{theorem}{Theorem}
\newtheorem{corollary}{Corollary}
\newtheorem{assumption}{Assumption}
\begin{document}

\begin{frontmatter}

\begin{fmbox}
\dochead{Research}


\title{Dynamic Network Sampling for Community Detection}


\author[
addressref={aff1}, 
corref={aff1}, 
email={cmu2@jhu.edu}
]{\inits{C.}\fnm{Cong} \snm{Mu}}
\author[
addressref={aff2}, 
email={youngser@jhu.edu}
]{\inits{Y.}\fnm{Youngser} \snm{Park}}
\author[
addressref={aff1}, 
email={cep@jhu.edu}
]{\inits{C.E.}\fnm{Carey E.} \snm{Priebe}}


\address[id=aff1]{
	\orgdiv{Department of Applied Mathematics and Statistics},             
	\orgname{Johns Hopkins University},          
	\city{Baltimore},                              
	\cny{US}                                    
}
\address[id=aff2]{%
	\orgdiv{Center for Imaging Science},
	\orgname{Johns Hopkins University},
	\city{Baltimore},
	\cny{US}
}



\end{fmbox}


\begin{abstractbox}

\begin{abstract} 
We propose a dynamic network sampling scheme to optimize block recovery for stochastic blockmodel (SBM) in the case where it is prohibitively expensive to observe the entire graph. Theoretically, we provide justification of our proposed Chernoff-optimal dynamic sampling scheme via the Chernoff information. Practically, we evaluate the performance, in terms of block recovery, of our method on several real datasets from different domains. Both theoretically and practically results suggest that our method can identify vertices that have the most impact on block structure so that one can only check whether there are edges between them to save significant resources but still recover the block structure.
\end{abstract}


\begin{keyword}
\kwd{dynamic network sampling}
\kwd{stochastic blockmodel}
\kwd{community detection}
\kwd{Chernoff information}
\end{keyword}


\end{abstractbox}
%

\end{frontmatter}



\section{Introduction}
In network inference applications, it is important to detect community structure, i.e., cluster vertices into potential blocks. However, it can be prohibitively expensive to observe the entire graph in many cases, especially for large graphs. For example, in a network where vertices represent landline phones and edges represent whether there is a call between two landline phones. Based on the size of the network, in terms of the number of vertices, it can be extremely expensive to check whether there is a call for every landline phone pairs. Therefore, if one can utilize the information carried by a partially oberverd graph, that is only a small number of landline phone pairs are verified, to identify the landline phones that may play a more important role in formulating communities. Then given limited resources, one can choose to only check whether there are calls between those landline phone pairs to achieve the goal of detecting potential block structure. Thus it becomes essential to identify vertices that have the most impact on block structure and only check whether there are edges between them to save significant resources but still recover the block structure.

Many classical methods only consider the adjacency or Laplacian matrices for community detection~\cite{Fortunato2016}. By contrast, vertex covariates can also be taken into consideration for the inference. These covariate-aware methods rely on either variational methods~\cite{Choi2012,Roy2019,Sweet2015} or spectral approaches~\cite{Binkiewicz2017,Huang2018,Mele2019,Mu2022}. However, none of them focus on the problem of clustering vertices for partially observed graphs. To address this issue, existing methods propose different types of random and adaptive sampling strategies to minimize the information loss from the data reduction~\cite{Yun2014,Purohit2017}.

We propose a dynamic network sampling scheme to optimize block recovery for stochastic blockmodel (SBM) when we only have limited resources to check whether there are edges between certain selected vertices. The innovation of our approach is the application of Chernoff information. To our knowledge, this is the first time that it has been applied to network sampling problems. Motivated by the Chernoff analysis, we not only propose a dynamic network sampling scheme to optimize block recovery, but also provide the framework and justification for using Chernoff information in subsequent inference for graphs.

The structure of this article is summarized as follows. Section~\ref{sec:2} reviews relevant models for random graphs and the basic idea of spectral methods. Section~\ref{sec:3} introduces the notion of Chernoff analysis for analytically measuring the performance of block recovery. Section~\ref{sec:4} includes our dynamic network sampling scheme and theoretical results. Section~\ref{sec:5} provides simulations and real data experiments to measure the algorithms' performance in terms of actual block recovery results. Section~\ref{sec:6} discusses the findings and presents some open questions for further investigation. Appendix provides technical details for our theoretical results.

\section{Models and Spectral Methods}

\label{sec:2}

In this work, we are interested in the inference task of block recovery (community detection). To model the block structure in edge-independent random graphs, we focus on the SBM and the generalized random dot product graph (GRDPG).

\begin{definition}[Generalized Random Dot Product Graph~\cite{Rubin-Delanchy2017}]
	\label{def:GRDPG}
	Let $ \mathbf{I}_{d_+ d_-} = \mathbf{I}_{d_+} \bigoplus \left(-\mathbf{I}_{d_-} \right)  $ with $ d_+ \geq 1 $ and $ d_- \geq 0 $. Let $ F $ be a $ d $-dimensional inner product distirbution with $ d = d_+ + d_- $ on $ \mathcal{X} \subset \mathbb{R}^d $ satisfying $ \mathbf{x}^\top  \mathbf{I}_{d_+ d_-} \mathbf{y} \in [0, 1] $ for all $ \mathbf{x}, \mathbf{y} \in \mathcal{X} $. Let $ \mathbf{A} $ be an adjacency matrix and $ \mathbf{X} = [\mathbf{X}_1, \cdots, \mathbf{X}_n]^\top \in \mathbb{R}^{n \times d} $ where $ \mathbf{X}_i \sim F $, i.i.d. for all $ i \in \{ 1, \cdots, n \} $. Then we say $ (\mathbf{A}, \mathbf{X}) \sim \text{GRDPG}(n, F, d_+, d_-) $ if for any $ i, j \in \{ 1, \cdots, n \} $
	\begin{equation}
	\mathbf{A}_{ij} \sim \text{Bernoulli}(\mathbf{P}_{ij}) \qquad \text{where} \qquad \mathbf{P}_{ij} = \mathbf{X}_{i}^\top \mathbf{I}_{d_+ d_-} \mathbf{X}_j.
	\end{equation}
\end{definition}

\begin{definition}[$ K $-block Stochastic Blockmodel Graph~\cite{Holland1983}]
	\label{def:SBM}
	The $ K $-block stochastic blockmodel (SBM) graph is an edge-independent random graph with each vertex belonging to one of $ K $ blocks. It can be parametrized by a block connectivity probability matrix $ \mathbf{B} \in (0, 1)^{K \times K} $ and a vector of block assignment probabilities $ \boldsymbol{\pi} \in (0, 1)^K $ summing to unity. Let $ \mathbf{A} $ be an adjacency matrix and $ \boldsymbol{\tau} $ be a vector of block assignments with $ \tau_i = k $ if vertex $ i $ is in block $ k $ (occuring with probability $ \pi_k $). We say $ (\mathbf{A}, \boldsymbol{\tau}) \sim \text{SBM}(n, \mathbf{B}, \boldsymbol{\pi}) $ if for any $ i, j \in \{ 1, \cdots, n \} $
	\begin{equation}
	\mathbf{A}_{ij} \sim \text{Bernoulli}(\mathbf{P}_{ij}) \qquad \text{where} \qquad \mathbf{P}_{ij} = \mathbf{B}_{\tau_i \tau_j}.
	\end{equation}
\end{definition}

\begin{remark}
	\label{remark:GRDPG-SBM}
	The SBM is a special case of the GRDPG model. Let $ (\mathbf{A}, \boldsymbol{\tau}) \sim \text{SBM}(n, \mathbf{B}, \boldsymbol{\pi}) $ as in Definition~\ref{def:SBM} where $ \mathbf{B} \in (0, 1)^{K \times K} $ with $ d_+ $ strictly positive eigenvalues and $ d_- $ strictly negative eigenvalues. To represent this SBM in the GRDPG model, we can choose $ \boldsymbol{\nu}_1, \cdots, \boldsymbol{\nu}_K \in \mathbb{R}^d $ where $ d = d_+ + d_- $ such that $ \boldsymbol{\nu}_k^\top \mathbf{I}_{d_+ d_-} \boldsymbol{\nu}_\ell = \mathbf{B}_{k \ell} $ for all $ k, \ell \in \{ 1, \cdots, K \} $. For example, we can take $ \boldsymbol{\nu} = \mathbf{U}_B |\mathbf{S}_B|^{1/2} $ where $ \mathbf{B} = \mathbf{U}_B \mathbf{S}_B \mathbf{U}_B^\top $ is the spectral decomposition of $ \mathbf{B} $ after re-ordering. Then we have the latent position of vertex $ i $ as $ \mathbf{X}_i = \boldsymbol{\nu}_k $ if $ \tau_i = k $.
\end{remark}

The parameters of the models can be estimated via spectral methods~\cite{Von2007}, which have been widely used in random graph models for community detection~\cite{Lyzinski2014,Lyzinski2016,McSherry2001,Rohe2011}. Two particular spectral embedding methods, adjacency spectral embedding (ASE) and Laplacian spectral embedding (LSE), are popular since they enjoy nice propertices including consistency~\cite{Sussman2012} and asymptotic normality~\cite{Athreya2016,Tang2018}.

\begin{definition}[Adjacency Spectral Embedding]
	Let $ \mathbf{A} \in \{0, 1 \}^{n \times n} $ be an adjacency matrix with eigendecomposition $ \mathbf{A} = \sum_{i=1}^{n} \lambda_i \mathbf{u}_i \mathbf{u}_i^\top $ where $ |\lambda_1| \geq \cdots \geq |\lambda_n| $ are the magnitude-ordered eigenvalues and $ \mathbf{u}_1, \cdots, \mathbf{u}_n $ are the corresponding orthonormal eigenvectors. Given the embedding dimension $ d < n $, the adjacency spectral embedding (ASE) of $ \mathbf{A} $ into $ \mathbb{R}^d $ is the $ n \times d $ matrix $ \mathbf{\widehat{X}} = \mathbf{U}_A |\mathbf{S}_A|^{1/2} $ where $ \mathbf{S}_A = \text{diag}(\lambda_1, \cdots, \lambda_d) $ and $ \mathbf{U}_A = [\mathbf{u}_1 | \cdots | \mathbf{u}_d] $.
\end{definition}

\begin{remark}
	\label{remark:dhat}
	There are different methods for choosing the embedding dimension~\cite{Hastie2009,Jolliffe2016}; we adopt the simple and efficient profile likelihood method~\cite{Zhu2006} to automatically identify ``elbow", which is the cut-off between the signal dimensions and the noise dimensions in scree plot.
\end{remark}

\section{Chernoff Analysis}

\label{sec:3}

To analytically measure the performance of algorithms for block recovery, we consider the notion of Chernoff information among other possible metrics. Chernoff information enjoys the advantages of being independent of the clustering procedure, i.e., it can be derived no matter which clustering methods are used, and it is intrinsically relating to the Bayes risk~\cite{Tang2018,Athreya2017,Karrer2011}.

\begin{definition}[Chernoff Information~\cite{Chernoff1952,Chernoff1956}]
	Let $ F_1 $ and $ F_2 $ be two continuous multivariate distributions on $ \mathbb{R}^d $ with density functions $ f_1 $ and $ f_2 $. The Chernoff information is defined as
	\begin{equation}
	\begin{split}
	C(F_1, F_2) & = - \log \left[\inf_{t \in (0,1)} \int_{\mathbb{R}^d} f_1^t(\mathbf{x}) f_2^{1-t}(\mathbf{x}) d\mathbf{x} \right] \\
	& = \sup_{t \in (0, 1)} \left[- \log \int_{\mathbb{R}^d} f_1^t(\mathbf{x}) f_2^{1-t}(\mathbf{x}) d\mathbf{x} \right].
	\end{split}
	\end{equation}
\end{definition}

\begin{remark}
	Consider the special case where we take $ F_1 = \mathcal{N}(\boldsymbol{\mu}_1, \boldsymbol{\Sigma}_1) $ and $ F_2 = \mathcal{N}(\boldsymbol{\mu}_2, \boldsymbol{\Sigma}_2) $; then the corresponding Chernoff information is
	\begin{equation}
	C(F_1, F_2) = \sup_{t \in (0, 1)} \left[ \frac{1}{2} t (1-t) (\boldsymbol{\mu}_1 - \boldsymbol{\mu}_2)^\top \boldsymbol{\Sigma}_t^{-1} (\boldsymbol{\mu}_1 - \boldsymbol{\mu}_2) + \frac{1}{2} \log \frac{\lvert \boldsymbol{\Sigma}_t \rvert}{\lvert \boldsymbol{\Sigma}_1 \rvert^t \lvert \boldsymbol{\Sigma}_2 \rvert^{1-t}} \right],
	\end{equation}
	where $ \boldsymbol{\Sigma}_t = t \boldsymbol{\Sigma}_1 + (1-t) \boldsymbol{\Sigma}_2 $.
\end{remark}

The comparsion of block recovery via Chernoff information is based on the statistical information between the limiting distributions of the blocks and smaller statistical information implies less information to discriminate between different blocks of the SBM. To that end, we also review the limiting results of ASE for SBM, essential for investigating Chernoff information.

\begin{theorem}[CLT of ASE for SBM~\cite{Rubin-Delanchy2017}]
	\label{thm:CLT-ASE-SBM}
	Let $ (\mathbf{A}^{(n)}, \mathbf{X}^{(n)}) \sim \text{GRDPG}(n, F, d_+, d_-) $ be a sequence of adjacency matrices and associated latent positions of a $ d $-dimensional GRDPG as in Definition~\ref{def:GRDPG} from an inner product distribution $ F $ where $ F $ is a mixture of $ K $ point masses in $ \mathbb{R}^d $, i.e.,
	\begin{equation}
	F = \sum_{k=1}^{K} \pi_k \delta_{\boldsymbol{\nu}_k} \qquad \text{with} \qquad \forall k, \; \pi_k > 0 \quad \text{and} \quad \sum_{k=1}^{K} \pi_k = 1,
	\end{equation}
	where $ \delta_{\boldsymbol{\nu}_k} $ is the Dirac delta measure at $ \nu_k $. Let $ \Phi(\mathbf{z}, \boldsymbol{\Sigma}) $ denote the cumulative distribution function (CDF) of a multivariate Gaussian distribution with mean $ \boldsymbol{0} $ and covariance matrix $ \boldsymbol{\Sigma} $, evaluated at $ \mathbf{z} \in \mathbb{R}^d $. Let $ \mathbf{\widehat{X}}^{(n)} $ be the ASE of $ \mathbf{A}^{(n)} $ with $ \mathbf{\widehat{X}}^{(n)}_i $ as the $ i $-th row (same for $ \mathbf{X}^{(n)}_i $). Then there exists a sequence of matrices $ \mathbf{M}_n \in \mathbb{R}^{d \times d} $ satisfying $ \mathbf{M}_n \mathbf{I}_{d_+ d_-} \mathbf{M}_n^\top = \mathbf{I}_{d_+ d_-} $ such that for all $ \mathbf{z} \in \mathbb{R}^d $ and fixed index i,
	\begin{equation}
	\mathbb{P} \left\{ \sqrt{n} \left(\mathbf{M}_n \mathbf{\widehat{X}}^{(n)}_i - \mathbf{X}^{(n)}_i \right) \leq \mathbf{z} \; \big| \; \mathbf{X}^{(n)}_i = \boldsymbol{\nu}_k  \right\} \to \Phi(\mathbf{z}, \boldsymbol{\Sigma}_k),
	\end{equation}
	where for $ \boldsymbol{\nu} \sim F $
	\begin{equation}
	\label{eq:Sigmax}
	\boldsymbol{\Sigma}_k = \boldsymbol{\Sigma}(\boldsymbol{\nu}_k) = \mathbf{I}_{d_+ d_-} \boldsymbol{\Delta}^{-1} \mathbb{E} \left[ \left(\boldsymbol{\nu}_k^\top \mathbf{I}_{d_+ d_-} \boldsymbol{\nu} \right) \left(1-\boldsymbol{\nu}_k^\top \mathbf{I}_{d_+ d_-} \boldsymbol{\nu} \right) \boldsymbol{\nu} \boldsymbol{\nu}^\top \right] \boldsymbol{\Delta}^{-1} \mathbf{I}_{d_+ d_-},
	\end{equation}
	with
	\begin{equation}
	\label{eq:Delta}
	\boldsymbol{\Delta} = \mathbb{E} \left[ \boldsymbol{\nu} \boldsymbol{\nu}^\top \right].
	\end{equation}
\end{theorem}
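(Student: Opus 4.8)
The plan is to prove the central limit theorem by a first-order perturbation expansion of the adjacency spectral embedding about the population latent positions, reducing the rowwise residual to a normalized sum of independent, bounded, mean-zero random vectors to which a classical multivariate CLT applies; throughout I suppress the superscript $(n)$ and write $\mathbf{P} = \mathbf{X} \mathbf{I}_{d_+ d_-} \mathbf{X}^\top$ for the conditional edge-probability matrix given the latent positions. First I would fix the alignment. Taking the spectral decomposition $\mathbf{P} = \mathbf{U}_P \mathbf{S}_P \mathbf{U}_P^\top$ with $d_+$ positive and $d_-$ negative eigenvalues, and writing $\mathbf{P} = (\mathbf{U}_P |\mathbf{S}_P|^{1/2}) \mathbf{I}_{d_+ d_-} (\mathbf{U}_P |\mathbf{S}_P|^{1/2})^\top$, the factorization defining $\mathbf{X}$ is identified only up to the indefinite orthogonal group $\mathbb{O}(d_+, d_-) = \{ \mathbf{Q} : \mathbf{Q} \mathbf{I}_{d_+ d_-} \mathbf{Q}^\top = \mathbf{I}_{d_+ d_-} \}$. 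The matrix $\mathbf{M}_n$ of the statement is the element of $\mathbb{O}(d_+, d_-)$ produced by an indefinite Procrustes alignment of $\widehat{\mathbf{X}} = \mathbf{U}_A |\mathbf{S}_A|^{1/2}$ to $\mathbf{U}_P |\mathbf{S}_P|^{1/2}$; because it must preserve the signature, it satisfies $\mathbf{M}_n \mathbf{I}_{d_+ d_-} \mathbf{M}_n^\top = \mathbf{I}_{d_+ d_-}$.

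Next I would derive the linearization. Standard concentration for edge-independent graphs gives $\| \mathbf{A} - \mathbf{P} \| = O_P(\sqrt{n})$, while the nonzero eigenvalues of $\mathbf{P}$ grow at rate $\Theta(n)$ under the point-mass mixture. A resolvent / matrix-perturbation expansion, controlled in the two-to-infinity norm so as to obtain rowwise rather than merely spectral bounds, then yields for each fixed index $i$
\[
\mathbf{M}_n \widehat{\mathbf{X}}_i - \mathbf{X}_i = \mathbf{I}_{d_+ d_-} \boldsymbol{\Delta}_n^{-1} \left( \frac{1}{n} \sum_{j=1}^{n} (\mathbf{A}_{ij} - \mathbf{P}_{ij}) \mathbf{X}_j \right) + \mathbf{R}_{ni},
\]
where $\boldsymbol{\Delta}_n = \tfrac{1}{n} \mathbf{X}^\top \mathbf{X}$ and the remainder satisfies $\sqrt{n}\, \mathbf{R}_{ni} = o_P(1)$.

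Then I would apply the CLT to the leading term. Conditioning on $\mathbf{X}_i = \boldsymbol{\nu}_k$, the summands $(\mathbf{A}_{ij} - \mathbf{P}_{ij}) \mathbf{X}_j$ are independent across $j$, mean zero, and uniformly bounded, with $\mathrm{Var}(\mathbf{A}_{ij} - \mathbf{P}_{ij}) = \mathbf{P}_{ij}(1 - \mathbf{P}_{ij})$ and $\mathbf{P}_{ij} = \boldsymbol{\nu}_k^\top \mathbf{I}_{d_+ d_-} \mathbf{X}_j$. By the law of large numbers $\boldsymbol{\Delta}_n \to \boldsymbol{\Delta} = \mathbb{E}[\boldsymbol{\nu} \boldsymbol{\nu}^\top]$, and the conditional covariance of $\tfrac{1}{\sqrt{n}} \sum_j (\mathbf{A}_{ij} - \mathbf{P}_{ij}) \mathbf{X}_j$ converges to $\mathbb{E}[(\boldsymbol{\nu}_k^\top \mathbf{I}_{d_+ d_-} \boldsymbol{\nu})(1 - \boldsymbol{\nu}_k^\top \mathbf{I}_{d_+ d_-} \boldsymbol{\nu}) \boldsymbol{\nu} \boldsymbol{\nu}^\top]$. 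A Lindeberg--Feller CLT, whose condition is immediate from boundedness, then gives asymptotic normality of the leading term with covariance $\mathbf{I}_{d_+ d_-} \boldsymbol{\Delta}^{-1} \mathbb{E}[\cdots] \boldsymbol{\Delta}^{-1} \mathbf{I}_{d_+ d_-} = \boldsymbol{\Sigma}_k$, matching~\eqref{eq:Sigmax}, and Slutsky's theorem absorbs the $o_P(1)$ remainder from the previous step.

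The main obstacle is the linearization: producing a genuinely negligible remainder demands rowwise (two-to-infinity) control of the eigenvector perturbation rather than the coarser operator-norm bound, and the indefinite signature forces one to propagate the sign matrix $\mathbf{I}_{d_+ d_-}$ through every order of the expansion and to align via $\mathbb{O}(d_+, d_-)$ rather than the orthogonal group, whose elements are bounded in norm whereas indefinite orthogonal ones need not be. In particular, showing that the quadratic contribution built from $(\mathbf{A} - \mathbf{P}) \mathbf{U}_P \mathbf{S}_P^{-1} \mathbf{U}_P^\top (\mathbf{A} - \mathbf{P})$ is $o_P(n^{-1/2})$ rowwise is where the delicate bookkeeping concentrates.
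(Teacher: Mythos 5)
This theorem is quoted from Rubin-Delanchy et al.~\cite{Rubin-Delanchy2017}; the paper itself gives no proof of it, so there is no internal argument to compare against. Your sketch reproduces the standard argument of that reference --- indefinite Procrustes alignment over $\mathbb{O}(d_+,d_-)$, a two-to-infinity-norm linearization with leading term $\mathbf{I}_{d_+ d_-}\boldsymbol{\Delta}_n^{-1}\, n^{-1}\sum_{j}(\mathbf{A}_{ij}-\mathbf{P}_{ij})\mathbf{X}_j$, and a Lindeberg--Feller CLT plus Slutsky --- and it is correct in outline, with the genuinely delicate steps (rowwise control of the remainder and the unboundedness of the indefinite orthogonal group) correctly identified, though, as you acknowledge, asserted rather than carried out.
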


For a $ K $-block SBM, let $ \mathbf{B} \in (0, 1)^{K \times K} $ be the block connectivity probability matrix and $ \boldsymbol{\pi} \in (0, 1)^K $ be the vector of block assignment probabilities. Given an $ n $ vertex instantiation of the SBM parameterized by $ \mathbf{B} $ and $ \boldsymbol{\pi} $, for sufficiently large $ n $, the large sample optimal error rate for estimating the block assignments using ASE can be measured via Chernoff information as~\cite{Tang2018,Athreya2017}
\begin{equation}
\label{eq:rho}
\rho = \min_{k \neq l} \sup_{t \in (0, 1)} \left[ \frac{1}{2} n t (1-t) (\boldsymbol{\nu}_k - \boldsymbol{\nu}_\ell)^\top \boldsymbol{\Sigma}_{k\ell}^{-1}(t) (\boldsymbol{\nu}_k - \boldsymbol{\nu}_\ell) + \frac{1}{2} \log \frac{\lvert \boldsymbol{\Sigma}_{k \ell}(t) \rvert}{\lvert \boldsymbol{\Sigma}_k \rvert^t \lvert \boldsymbol{\Sigma}_\ell \rvert^{1-t}} \right],
\end{equation}
where $ \boldsymbol{\Sigma}_{k\ell}(t) = t \boldsymbol{\Sigma}_k + (1-t) \boldsymbol{\Sigma}_\ell $, $ \boldsymbol{\Sigma}_k = \boldsymbol{\Sigma}(\boldsymbol{\nu}_k) $ and $ \boldsymbol{\Sigma}_\ell = \boldsymbol{\Sigma}(\boldsymbol{\nu}_\ell) $ are defined as in Eq.~\eqref{eq:Sigmax}. Also note that as $ n \to \infty $, the logarithm term in Eq.~\eqref{eq:rho} will be dominated by the other term. Then we have the approximate Chernoff information as
\begin{equation}
\label{eq:rhoapprox}
\rho \approx \min_{k \neq l} C_{k ,\ell}(\mathbf{B}, \boldsymbol{\pi}),
\end{equation}
where
\begin{equation}
\label{eq:C_kl}
C_{k ,\ell}(\mathbf{B}, \boldsymbol{\pi}) =\sup_{t \in (0, 1)} \left[ t (1-t) (\boldsymbol{\nu}_k - \boldsymbol{\nu}_\ell)^\top \boldsymbol{\Sigma}_{k\ell}^{-1}(t) (\boldsymbol{\nu}_k - \boldsymbol{\nu}_\ell) \right].
\end{equation}

We also introduce the following two notions, which will be used when we describe our dynamic network sampling scheme.

\begin{definition}[Chernoff-active Blocks]
	For $K$-block SBM parametrized by the block connectivity probability matrix $ \mathbf{B} \in (0, 1)^{K \times K} $ and the vector of block assignment probabilities $ \boldsymbol{\pi} \in (0, 1)^K $. The Chernoff-active blocks $ (k^*, \ell^*) $ are defined as 
	\begin{equation}
	(k^*, \ell^*) = \arg \min_{k \neq l} C_{k ,\ell}(\mathbf{B}, \boldsymbol{\pi}),
	\end{equation}
	where $ C_{k ,\ell}(\mathbf{B}, \boldsymbol{\pi}) $ is defined as in Eq.~\eqref{eq:rhoapprox}.
\end{definition}

\begin{definition}[Chernoff Superiority]
	For $K$-block SBMs, given two block connectivity probability matrices $ \mathbf{B}, \mathbf{B}^\prime \in (0, 1)^{K \times K} $ and a vector of block assignment probabilities $ \boldsymbol{\pi} \in (0, 1)^K $. Let $ \rho_B $ and $ \rho_{B^\prime} $ denote the Chernoff information obtained as in Eq.~\eqref{eq:rhoapprox} corresponding to $ \mathbf{B} $ and $ \mathbf{B}^\prime $ respectively. We say that $ \mathbf{B} $ is Chernoff superior to $ \mathbf{B}^\prime $, denoted as $ \mathbf{B} \succ \mathbf{B}^\prime $, if $ \rho_B > \rho_{B^\prime} $.
\end{definition}

\begin{remark}
	If $ \mathbf{B} $ is Chernoff superior to $ \mathbf{B}^\prime $, then we can have a better block recovery from $ \mathbf{B} $ than $ \mathbf{B}^\prime $. In addition, Chernoff superiority is transitive, which is straightforward from the definition.
\end{remark}

\section{Dynamic Network Sampling}

\label{sec:4}

We start our analysis with the unobserved block connectivity probability matrix $ \mathbf{B} $ for SBM and then illustrate how to migrate the proposed methods for real applications when we have the observed adjacency matrix $ \mathbf{A} $.

Consider the $ K $-block SBM parametrized by the block connectivity probability matrix $ \mathbf{B} \in (0, 1)^{K \times K} $ and the vector of block assignment probabilities $ \boldsymbol{\pi} \in (0, 1)^K $ with $ K > 2 $. Given initial sampling parameter $ p_0 \in (0, 1) $, initial sampling is uniformly at random, i.e., 
\begin{equation}
\label{eq:B0}
\mathbf{B}_0 = p_0 \mathbf{B}.
\end{equation}

This initial sampling simulates the case when one only obersves a partial graph with a small portion of the edges instead of the entire graph with all existing edges.

\begin{theorem}
	\label{thm:Chernoff-Superiority}
	For $K$-block SBMs, given two block connectivity probability matrices $ \mathbf{B}, p\mathbf{B} \in (0, 1)^{K \times K} $ with $ p \in (0, 1) $ and a vector of block assignment probabilities $ \boldsymbol{\pi} \in (0, 1)^K $, we have $ \mathbf{B} \succ p \mathbf{B} $.
\end{theorem}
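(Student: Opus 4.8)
The plan is to reduce the comparison of the two Chernoff informations $\rho_B$ and $\rho_{pB}$ to a Loewner-order comparison of the per-block limiting covariances, exploiting the fact that rescaling $\mathbf{B}\mapsto p\mathbf{B}$ rescales every latent position by the same factor $\sqrt{p}$. First I would invoke Remark~\ref{remark:GRDPG-SBM}: since $p>0$, the matrix $p\mathbf{B}$ has the same inertia $(d_+,d_-)$ as $\mathbf{B}$, so it admits a GRDPG representation in the same $\mathbb{R}^d$ with latent positions $\widetilde{\boldsymbol{\nu}}_k=\sqrt{p}\,\boldsymbol{\nu}_k$, because $(\sqrt{p}\boldsymbol{\nu}_k)^\top\mathbf{I}_{d_+d_-}(\sqrt{p}\boldsymbol{\nu}_\ell)=p\,\mathbf{B}_{k\ell}$. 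The block probabilities $\boldsymbol{\pi}$ are unchanged, so the mixing weights of the distribution $F$ are identical for the two models.

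Next I would track how the quantities entering Eq.~\eqref{eq:C_kl} transform. The second-moment matrix scales as $\widetilde{\boldsymbol{\Delta}}=\mathbb{E}[\widetilde{\boldsymbol{\nu}}\widetilde{\boldsymbol{\nu}}^\top]=p\,\boldsymbol{\Delta}$, and writing $g=\boldsymbol{\nu}_k^\top\mathbf{I}_{d_+d_-}\boldsymbol{\nu}$ one has $\widetilde{\boldsymbol{\nu}}_k^\top\mathbf{I}_{d_+d_-}\widetilde{\boldsymbol{\nu}}=pg$ and $\widetilde{\boldsymbol{\nu}}\widetilde{\boldsymbol{\nu}}^\top=p\,\boldsymbol{\nu}\boldsymbol{\nu}^\top$. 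Substituting into Eq.~\eqref{eq:Sigmax}, the powers of $p$ cancel: the two factors $\widetilde{\boldsymbol{\Delta}}^{-1}=p^{-1}\boldsymbol{\Delta}^{-1}$ contribute $p^{-2}$ while the expectation contributes $p\cdot p=p^2$, leaving
\begin{equation}
\widetilde{\boldsymbol{\Sigma}}_k = \mathbf{I}_{d_+d_-}\boldsymbol{\Delta}^{-1}\,\mathbb{E}\!\left[g(1-pg)\,\boldsymbol{\nu}\boldsymbol{\nu}^\top\right]\boldsymbol{\Delta}^{-1}\mathbf{I}_{d_+d_-}.
\end{equation}
The only change relative to $\boldsymbol{\Sigma}_k$ is that the scalar weight $g(1-g)$ becomes $g(1-pg)$, and since $g(1-pg)-g(1-g)=(1-p)g^2$ I obtain the clean identity
\begin{equation}
\widetilde{\boldsymbol{\Sigma}}_k - \boldsymbol{\Sigma}_k = (1-p)\,\mathbf{I}_{d_+d_-}\boldsymbol{\Delta}^{-1}\,\mathbb{E}\!\left[g^2\,\boldsymbol{\nu}\boldsymbol{\nu}^\top\right]\boldsymbol{\Delta}^{-1}\mathbf{I}_{d_+d_-}.
\end{equation}
Because $g^2\ge 0$ and $\boldsymbol{\nu}\boldsymbol{\nu}^\top$ is positive semidefinite, the expectation is positive semidefinite, and congruence by the symmetric matrix $\boldsymbol{\Delta}^{-1}\mathbf{I}_{d_+d_-}$ preserves this; with $1-p>0$ this yields $\widetilde{\boldsymbol{\Sigma}}_k\succeq\boldsymbol{\Sigma}_k$ in the Loewner order for every block $k$.

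Finally I would push this ordering through the Chernoff functional. For every $t\in(0,1)$, $\widetilde{\boldsymbol{\Sigma}}_{k\ell}(t)=t\widetilde{\boldsymbol{\Sigma}}_k+(1-t)\widetilde{\boldsymbol{\Sigma}}_\ell\succeq\boldsymbol{\Sigma}_{k\ell}(t)$, hence $\widetilde{\boldsymbol{\Sigma}}_{k\ell}^{-1}(t)\preceq\boldsymbol{\Sigma}_{k\ell}^{-1}(t)$ since inversion reverses the Loewner order on positive definite matrices. Combined with $\widetilde{\boldsymbol{\nu}}_k-\widetilde{\boldsymbol{\nu}}_\ell=\sqrt{p}(\boldsymbol{\nu}_k-\boldsymbol{\nu}_\ell)$, the integrand in Eq.~\eqref{eq:C_kl} obeys, for each $t$,
\begin{equation}
t(1-t)(\widetilde{\boldsymbol{\nu}}_k - \widetilde{\boldsymbol{\nu}}_\ell)^\top \widetilde{\boldsymbol{\Sigma}}_{k\ell}^{-1}(t)(\widetilde{\boldsymbol{\nu}}_k - \widetilde{\boldsymbol{\nu}}_\ell) \le p\, t(1-t)(\boldsymbol{\nu}_k - \boldsymbol{\nu}_\ell)^\top \boldsymbol{\Sigma}_{k\ell}^{-1}(t)(\boldsymbol{\nu}_k - \boldsymbol{\nu}_\ell).
\end{equation}
Taking the supremum over $t$ gives $C_{k,\ell}(p\mathbf{B},\boldsymbol{\pi})\le p\,C_{k,\ell}(\mathbf{B},\boldsymbol{\pi})$ for each pair, and then the minimum over $k\neq\ell$ gives $\rho_{pB}\le p\,\rho_B$. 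Since the blocks are distinct we have $\rho_B>0$, and $p<1$ forces $\rho_{pB}\le p\,\rho_B<\rho_B$, i.e.\ $\mathbf{B}\succ p\mathbf{B}$.

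I expect the main obstacle to be the covariance bookkeeping in the second step: getting the powers of $p$ to cancel correctly in Eq.~\eqref{eq:Sigmax} and recognizing that the residual difference is \emph{exactly} $(1-p)$ times a positive semidefinite matrix. Once that identity is in hand, everything downstream is routine monotonicity of the Loewner order under convex combination, inversion, and the scalar factor $p$ coming from the rescaled mean separation.
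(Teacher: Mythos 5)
Your proof is correct, and it takes a genuinely different route from the paper's. Both arguments begin identically --- representing $p\mathbf{B}$ by the rescaled latent positions $\sqrt{p}\,\boldsymbol{\nu}_k$ and tracking how $\boldsymbol{\Delta}$ and the covariances in Eq.~\eqref{eq:Sigmax} transform --- but they diverge at the decomposition of the new covariance. The paper writes $\boldsymbol{\Sigma}_k^{\prime} = p\boldsymbol{\Sigma}_k + \mathbf{V}^\top \mathbf{D}_k(p)\mathbf{V}$ with $\mathbf{D}_k(p)$ positive definite, and therefore has to invert a sum, which it does via the Sherman--Morrison--Woodbury formula, extracting an explicit positive correction $h_{k\ell}(p,t)$ by which the quadratic form drops. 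You instead isolate $\widetilde{\boldsymbol{\Sigma}}_k - \boldsymbol{\Sigma}_k = (1-p)\,\mathbf{I}_{d_+d_-}\boldsymbol{\Delta}^{-1}\mathbb{E}\left[g^2 \boldsymbol{\nu}\boldsymbol{\nu}^\top\right]\boldsymbol{\Delta}^{-1}\mathbf{I}_{d_+d_-} \succeq 0$ (your weight bookkeeping $g(1-pg)-g(1-g)=(1-p)g^2$ checks out and is consistent with the paper's expression for $\boldsymbol{\Sigma}_k^{\prime}$), so that only the antitonicity of inversion on the positive-definite cone is needed and no Woodbury identity appears. Your route is more elementary and yields a quantitative strengthening the paper does not state, namely $\rho_{pB} \le p\,\rho_B$; the paper's route yields an explicit formula for the gap. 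One shared caveat: your strict inequality rests on $\rho_B>0$, i.e.\ on $\boldsymbol{\nu}_k \ne \boldsymbol{\nu}_\ell$ for $k \ne \ell$ (distinct rows of $\mathbf{B}$), while the paper's strictness rests on the analogous nondegeneracy $\mathbf{x} = \mathbf{V}\boldsymbol{\Sigma}_{k\ell}^{-1}(t)(\boldsymbol{\nu}_k-\boldsymbol{\nu}_\ell) \ne \mathbf{0}$, so neither argument is worse off on this point.
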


The proof of Theorem~\ref{thm:Chernoff-Superiority} can be found in Appendix. As an illustration, consider a 4-block SBM parametrized by block connectivity probability matrix $ \mathbf{B} $ as
\begin{equation}
\label{eq:exampleB}
\mathbf{B} = 
\begin{bmatrix}
0.04 & 0.08 & 0.10 & 0.18 \\
0.08 & 0.16 & 0.20 & 0.36 \\
0.10 & 0.20 & 0.25 & 0.45 \\
0.18 & 0.36 & 0.45 & 0.81
\end{bmatrix}.
\end{equation}	

Figure~\ref{fig:rho0} shows Chernoff information $ \rho $ as in Eq.~\eqref{eq:rhoapprox} corresponding to $ \mathbf{B} $ as in Eq.~\eqref{eq:exampleB} and $ p \mathbf{B} $ for $ p \in  (0, 1) $. In addition, Figure~\ref{fig:rho0a} assumes $ \boldsymbol{\pi} = (\frac{1}{4}, \frac{1}{4}, \frac{1}{4}, \frac{1}{4}) $ and Figure~\ref{fig:rho0b} assumes $ \boldsymbol{\pi} = (\frac{1}{8}, \frac{1}{8}, \frac{3}{8}, \frac{3}{8}) $. As suggested by Theorem~\ref{thm:Chernoff-Superiority}, for any $ p \in (0, 1) $ we have $\rho_{B} > \rho_{pB} $ and thus $ \mathbf{B} \succ p \mathbf{B} $.

\begin{figure}[h!]
	\subfigure[balanced: $ \boldsymbol{\pi} = (\frac{1}{4}, \frac{1}{4}, \frac{1}{4}, \frac{1}{4}) $ \label{fig:rho0a}]{
		\includegraphics[width=0.45\textwidth]{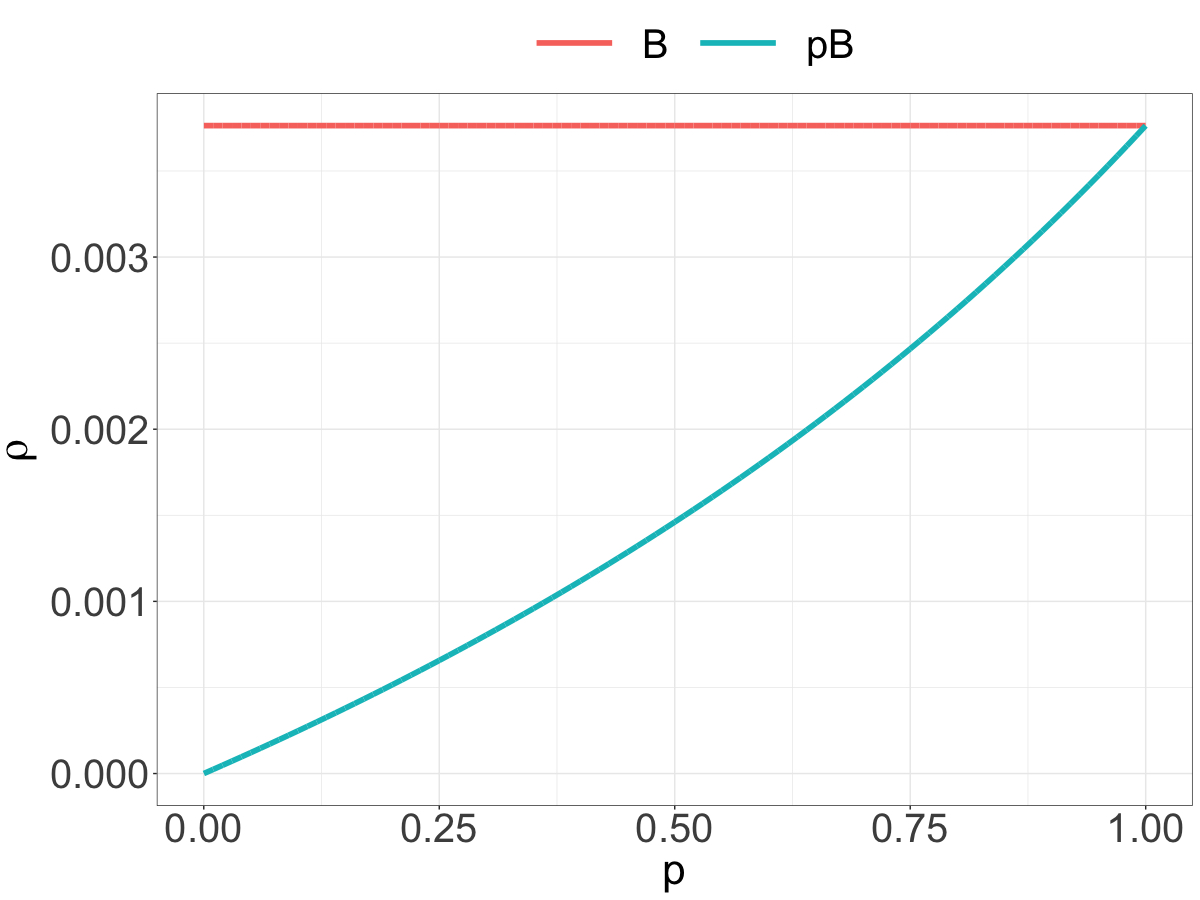}
	}
	\hfil
	\subfigure[unbalanced: $ \boldsymbol{\pi} = (\frac{1}{8}, \frac{1}{8}, \frac{3}{8}, \frac{3}{8}) $ \label{fig:rho0b}]{
		\includegraphics[width=0.45\textwidth]{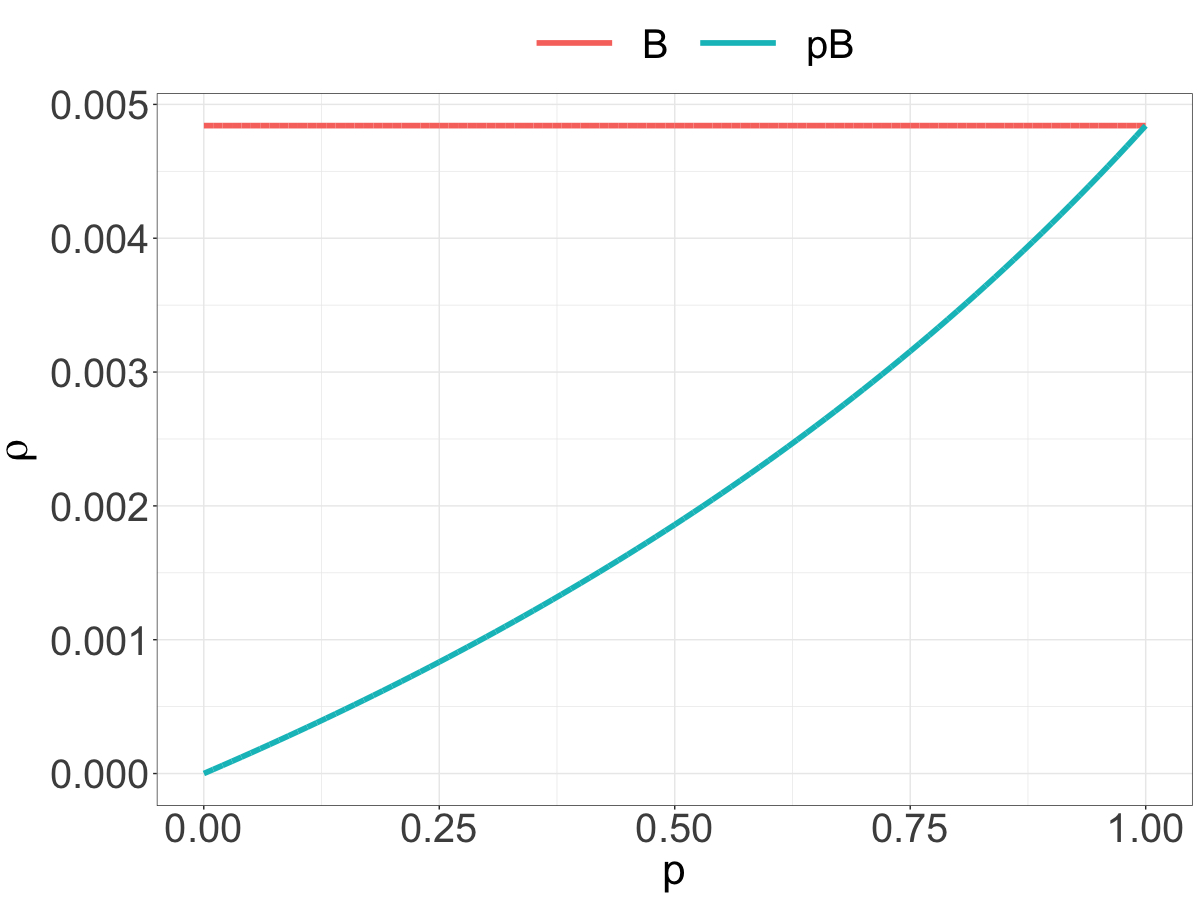}
	}
	\caption{Chernoff information $ \rho $ as in Eq.~\eqref{eq:rhoapprox} corresponding to $ \mathbf{B} $ as in Eq.~\eqref{eq:exampleB} and $ p \mathbf{B} $ for $ p \in  (0, 1) $.}
	\label{fig:rho0}
\end{figure}

Now given dynamic network sampling parameter $ p_1 \in (0, 1-p_0) $, the baseline sampling scheme can proceed uniformly at random again, i.e., 
\begin{equation}
\label{eq:B1}
\mathbf{B}_1 = \mathbf{B}_0 + p_1 \mathbf{B} = (p_0 + p_1) \mathbf{B}.
\end{equation}

This dynamic network sampling simulates the situation when one is given limited resources to sample some extra edges after observing the partial graph with only a small portion of the edges. Since we only have limited budget to sample another small portion of edges, one would benefit from identifying vertex pairs that have much influence on the community structure. In other words, the baseline sampling scheme just randomly choosing vertex pairs without using the information from the initial observed graphs and our goal is to design an alternative scheme to optimize this dynamic network sampling procedure so that one could have a better block recovery even with limited resources to only observe a partial graph with a small portion of the edges.

\begin{corollary}
	\label{cor:Chernoff-Superiority}
	For $K$-block SBMs, given block connectivity probability matrix $ \mathbf{B} \in (0, 1)^{K \times K} $ and a vector of block assignment probabilities $ \boldsymbol{\pi} \in (0, 1)^K $. We have $ \mathbf{B} \succ \mathbf{B}_1 \succ \mathbf{B}_0 $ where $ \mathbf{B}_0 $ is defined as in Eq.~\eqref{eq:B0} with $ p_0 \in (0, 1) $ and $ \mathbf{B}_1 $ is defined as in Eq.~\eqref{eq:B1} with $ p_1 \in (0, 1-p_0) $.
\end{corollary}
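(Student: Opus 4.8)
The plan is to reduce the corollary to two applications of Theorem~\ref{thm:Chernoff-Superiority} followed by the transitivity of Chernoff superiority. The central observation is that Theorem~\ref{thm:Chernoff-Superiority} is stated for an \emph{arbitrary} base matrix in $(0,1)^{K \times K}$ and an \emph{arbitrary} scaling factor $p \in (0,1)$, so it may be invoked repeatedly with different base matrices, provided each base matrix is itself a valid block connectivity probability matrix and the induced scaling factor lies in $(0,1)$.

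First I would record the ordering of the scaling factors. Writing $\mathbf{B} = 1 \cdot \mathbf{B}$, $\mathbf{B}_1 = (p_0 + p_1)\mathbf{B}$ as in Eq.~\eqref{eq:B1}, and $\mathbf{B}_0 = p_0 \mathbf{B}$ as in Eq.~\eqref{eq:B0}, the constraints $p_0 \in (0,1)$ and $p_1 \in (0, 1-p_0)$ yield $0 < p_0 < p_0 + p_1 < 1$. In particular both $p_0$ and $p_0 + p_1$ lie in $(0,1)$, and since every entry of $\mathbf{B}$ lies in $(0,1)$, so does every entry of $\mathbf{B}_0$ and $\mathbf{B}_1$; hence $\mathbf{B}_0, \mathbf{B}_1 \in (0,1)^{K \times K}$ are valid block connectivity probability matrices to which the theorem applies, all with the same fixed $\boldsymbol{\pi}$.

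For the first link $\mathbf{B} \succ \mathbf{B}_1$, I apply Theorem~\ref{thm:Chernoff-Superiority} with base matrix $\mathbf{B}$ and scaling factor $p = p_0 + p_1 \in (0,1)$, obtaining $\mathbf{B} \succ (p_0 + p_1)\mathbf{B} = \mathbf{B}_1$. For the second link $\mathbf{B}_1 \succ \mathbf{B}_0$, the key algebraic step is to re-express $\mathbf{B}_0$ as a scalar multiple of $\mathbf{B}_1$: since $p_0 + p_1 > 0$,
\[
\mathbf{B}_0 = p_0 \mathbf{B} = \frac{p_0}{p_0 + p_1}\,(p_0 + p_1)\mathbf{B} = q\,\mathbf{B}_1, \qquad q := \frac{p_0}{p_0 + p_1}.
\]
Because $0 < p_0 < p_0 + p_1$, we have $q \in (0,1)$. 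Applying Theorem~\ref{thm:Chernoff-Superiority} once more, now with base matrix $\mathbf{B}_1$ and scaling factor $q$, yields $\mathbf{B}_1 \succ q\,\mathbf{B}_1 = \mathbf{B}_0$.

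Finally, chaining the two links via the transitivity of Chernoff superiority (noted in the remark following the definition of Chernoff superiority) gives $\mathbf{B} \succ \mathbf{B}_1 \succ \mathbf{B}_0$, completing the argument. I do not expect a serious obstacle: the corollary is essentially a bookkeeping consequence of the theorem. The only point requiring genuine care is the reuse of Theorem~\ref{thm:Chernoff-Superiority} with $\mathbf{B}_1$ rather than the original $\mathbf{B}$ as the base matrix — this is legitimate precisely because the theorem holds for \emph{any} matrix in $(0,1)^{K \times K}$, and one must check that $\mathbf{B}_1$ and the induced factor $q = p_0/(p_0+p_1)$ satisfy the theorem's hypotheses, which is exactly what the second paragraph verifies.
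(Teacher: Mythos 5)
Your proposal is correct and follows essentially the same route as the paper's own proof: the paper likewise writes $\mathbf{B}_1 = (p_0+p_1)\mathbf{B}$ and $\mathbf{B}_0 = \frac{p_0}{p_0+p_1}\mathbf{B}_1$ and invokes Theorem~\ref{thm:Chernoff-Superiority} twice. Your version simply makes explicit the hypothesis checks (that $\mathbf{B}_1 \in (0,1)^{K\times K}$ and the induced scaling factors lie in $(0,1)$) that the paper leaves implicit.
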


The proof of Corollary~\ref{cor:Chernoff-Superiority} can be found in Appendix. This corollay implies that we can have a better block recovery from $ \mathbf{B}_1 $ than $ \mathbf{B}_0 $.

\begin{assumption}
	\label{cond:1}
	The Chernoff-active blocks after initial sampling is unique, i.e., there exists an unique pair $ \left(k_0^*, \ell_0^* \right) \in \{(k, \ell) \; | \; 1 \leq k < \ell \leq K \} $ such that
	\begin{equation}
	\left(k_0^*, \ell_0^* \right) = \arg \min_{k \neq l} C_{k ,\ell}(\mathbf{B}_0, \boldsymbol{\pi}),
	\end{equation}
	where $ \mathbf{B}_0 $ is defined as in Eq.~\eqref{eq:B0} and $ \boldsymbol{\pi} $ is the vector of block assignment probabilities.
\end{assumption}

To improve this baseline sampling scheme, we concentrate on the Chernoff-active blocks $ \left(k_0^*, \ell_0^* \right) $ after initial sampling assuming Assumption~\ref{cond:1} holds. Instead of sampling from the entire block connectivity probability matrix $ \mathbf{B} $ like the baseline sampling scheme as in Eq.~\eqref{eq:B1}, we only sample the entries associated with the Chernoff-active blocks. As a competitor to $ \mathbf{B}_1 $, our Chernoff-optimal dynamic network sampling scheme is then given by
\begin{equation}
\label{eq:B1tilde}
\widetilde{\mathbf{B}}_1 = \mathbf{B}_0 + \frac{p_1}{\left(\pi_{k_0^*} + \pi_{\ell_0^*}\right)^2 } \mathbf{B} \circ \mathbf{1}_{k_0^*, \ell_0^*},
\end{equation}
where $ \circ $ denotes Hadamard product,  $ \pi_{k_0^*}  $ and $ \pi_{\ell_0^*} $ denote the block assignment probabilities for block $ k_0^* $ and $ \ell_0^* $ respectively, and $ \mathbf{1}_* $ is the $ K \times K $ binary matrix with 0's everywhere except for 1's associated with the Chernoff-active blocks $ \left(k_0^*, \ell_0^* \right) $, i.e., for any $ i, j \in \{1, \cdots, K \} $
\begin{equation}
\mathbf{1}_{k_0^*, \ell_0^*}[i, j]  = 
\begin{cases}
1 & \text{if} \;\; (i, j) \in \left\{ \left(k_0^*, k_0^* \right), \;  \left(k_0^*, \ell_0^* \right), \;  \left(\ell_0^*, k_0^* \right), \;  \left(\ell_0^*, \ell_0^* \right)  \right\} \\
0 & \text{otherwise}
\end{cases}
.
\end{equation}

Note that the multiplier $ \frac{1}{\left(\pi_{k_0^*} + \pi_{\ell_0^*}\right)^2} $ on $ p_1 \mathbf{B} \circ \mathbf{1}_* $ assures that we sample the same number of potential edges with $ \widetilde{\mathbf{B}}_1 $ as we do with $ \mathbf{B}_1 $ in the baseline sampling scheme. In addition, to avoid over-sampling with respect to $ \mathbf{B} $, i.e., to ensure $ \widetilde{\mathbf{B}}_1[i, j] \leq \mathbf{B}[i, j] $ for any $ i, j \in \{1, \cdots, K \} $, we require
\begin{equation}
\label{eq:p1max}
p_1 \leq p_1^{\text{max}} = \left( 1 - p_0 \right) \left(\pi_{k_0^*} + \pi_{\ell_0^*}\right)^2.
\end{equation} 

\begin{assumption}
	\label{cond:Chernoff-Superiority2}
	For $K$-block SBMs, given a block connectivity probability matrix $ \mathbf{B} \in (0, 1)^{K \times K} $ and a vector of block assignment probabilities $ \boldsymbol{\pi} \in (0, 1)^K $. Let $ p_1^* \in (0, p_1^{\text{max}}] $ be the smallest positive $ p_1 \leq p_1^{\text{max}} $ such that 
	\begin{equation}
	\arg \min_{k \neq l} C_{k ,\ell}(\widetilde{\mathbf{B}}_1, \boldsymbol{\pi})
	\end{equation}
	is not unique where $ p_1^{\text{max}} $ is defined as in Eq.~\eqref{eq:p1max} and $ \widetilde{\mathbf{B}}_1 $ is defined as in Eq.~\eqref{eq:B1tilde}. If the arg min is always unique, let $ p_1^* = p_1^{\text{max}} $.
\end{assumption}

For any $ p_1 \in (0, p_1^*)$, we can have a better block recovery from $ \widetilde{\mathbf{B}}_1 $ than $ \mathbf{B}_1 $, i.e., our Chernoff-optimal dynamic network sampling sheme is better than the baseline sampling scheme in terms of block recovery. 

As an illustaration, consider the 4-block SBM with initial sampling parameter $ p_0 = 0.01 $ and block connectivity probability matrix $ \mathbf{B} $ as in Eq.~\eqref{eq:exampleB}. Figure~\ref{fig:rho1} shows the Chernoff information $ \rho $ as in Eq.~\eqref{eq:rhoapprox} corresponding to $ \mathbf{B} $ as in Eq.~\eqref{eq:exampleB}, $ \mathbf{B}_0 $ as in Eq.~\eqref{eq:B0}, $ \mathbf{B}_1 $ as in Eq.~\eqref{eq:B1}, and $ \widetilde{\mathbf{B}}_1 $ as in Eq.~\eqref{eq:B1tilde} with dynamic network sampling parameter $ p_1 \in (0, p_1^*) $ where $ p_1^* $ is defined as in Assumption~\ref{cond:Chernoff-Superiority2}. In addition, Figure~\ref{fig:rho1a} assumes $ \boldsymbol{\pi} = (\frac{1}{4}, \frac{1}{4}, \frac{1}{4}, \frac{1}{4}) $ and Figure~\ref{fig:rho1b} assumes $ \boldsymbol{\pi} = (\frac{1}{8}, \frac{1}{8}, \frac{3}{8}, \frac{3}{8}) $. Note that for any $ p_1 \in (0, p_1^*) $ we have $\rho_{B} > \rho_{\widetilde{B}_1} > \rho_{B_1} > \rho_{B_0} $ and thus $ \mathbf{B} \succ \widetilde{\mathbf{B}}_1 \succ \mathbf{B}_1 \succ \mathbf{B}_0 $. That is, in terms of Chernoff information, when given same amount of resources, the proposed Chernoff-optimal dynamic network sampling scheme can yield better block recovery results. In other words, to reach the same level of performance, in terms of Chernoff information, the proposed Chernoff-optimal dynamic network sampling scheme needs less resources.

\begin{figure}[h!]
	\subfigure[balanced: $ \boldsymbol{\pi} = (\frac{1}{4}, \frac{1}{4}, \frac{1}{4}, \frac{1}{4}) $ \label{fig:rho1a}]{
		\includegraphics[width=0.45\textwidth]{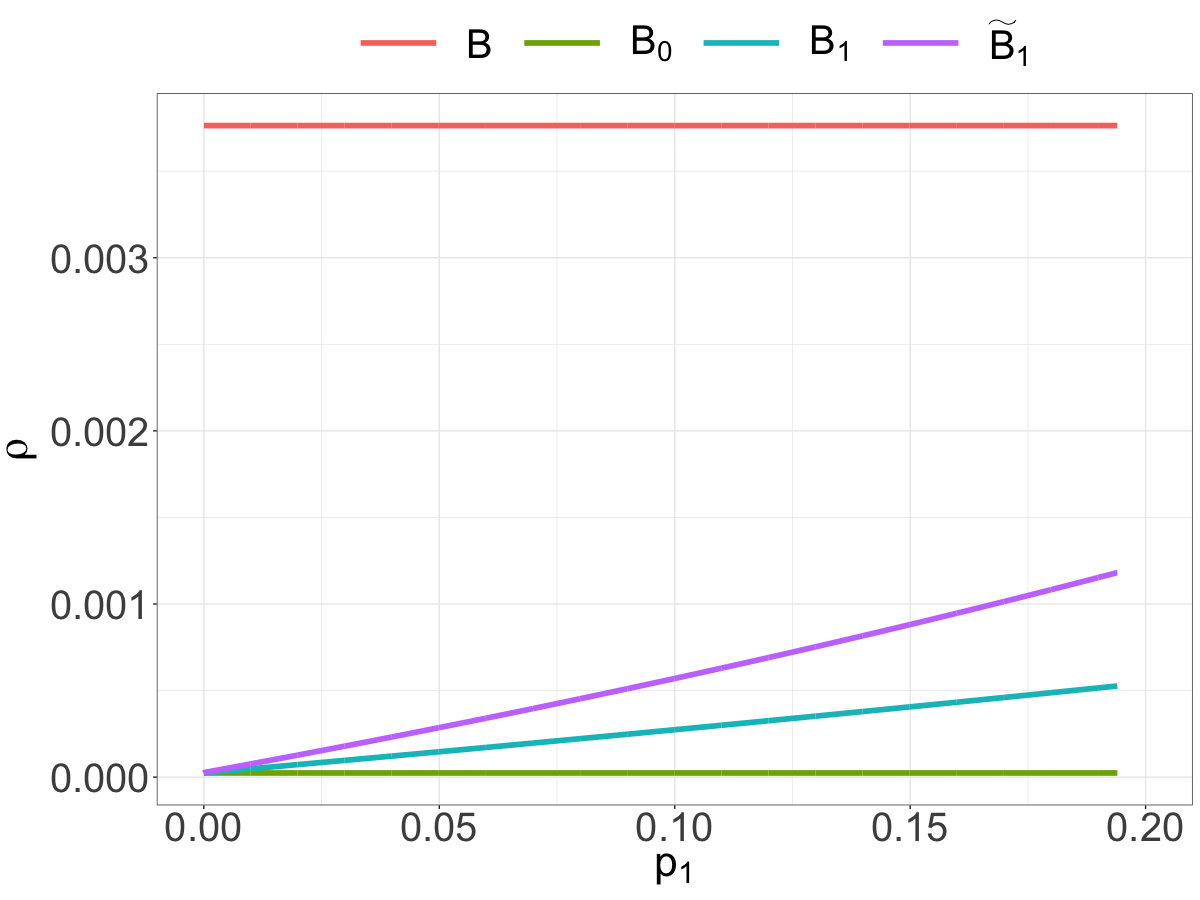}
	}
	\hfil
	\subfigure[unbalanced: $ \boldsymbol{\pi} = (\frac{1}{8}, \frac{1}{8}, \frac{3}{8}, \frac{3}{8}) $ \label{fig:rho1b}]{
		\includegraphics[width=0.45\textwidth]{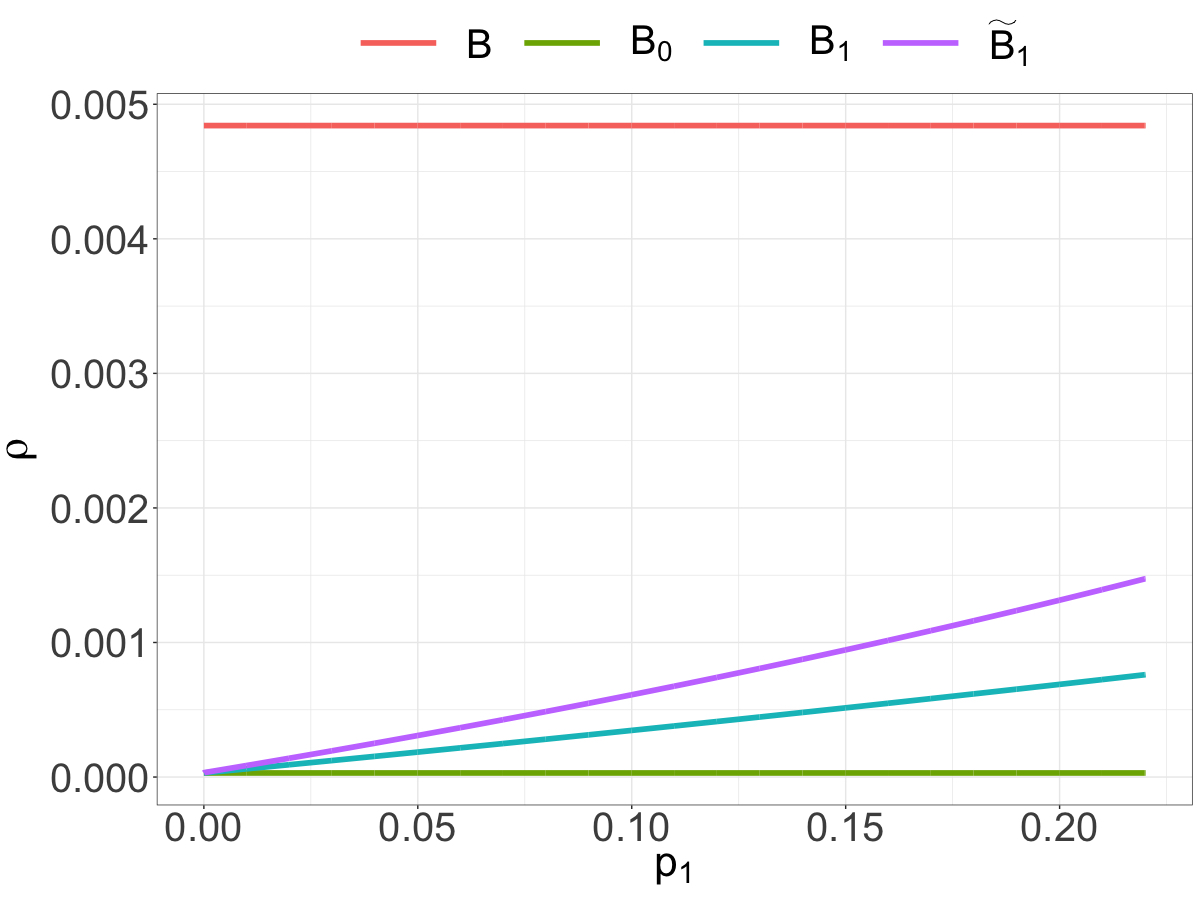}
	}
	\caption{Chernoff information $ \rho $ as in Eq.~\eqref{eq:rhoapprox} corresponding to $ \mathbf{B} $ as in Eq.~\eqref{eq:exampleB}, $ \mathbf{B}_0 $ as in Eq.~\eqref{eq:B0}, $ \mathbf{B}_1 $ as in Eq.~\eqref{eq:B1}, and $ \widetilde{\mathbf{B}}_1 $ as in Eq.~\eqref{eq:B1tilde} with initial sampling parameter $ p_0 = 0.01 $ and dynamic network sampling parameter $ p_1 \in (0, p_1^*) $ where $ p_1^* $ is defined as in Assumption~\ref{cond:Chernoff-Superiority2}.}
	\label{fig:rho1}
\end{figure}

As described earlier, it may be the case that $ p_1^* < p_1^{\text{max}} $ at which point Chernoff-active blocks change to $ (k_1^*, \ell_1^*) $. This potential non-uniquess of the Chernoff argmin is a consequence of our dynamic network sampling scheme. In the case of $ p_1 > p_1^* $, our Chernoff-optimal dynamic network sampling scheme is adopted as
\begin{equation}
\label{eq:B1tildestar}
\widetilde{\mathbf{B}}_1^* = \mathbf{B}_0 + \left(p_1 - p_1^* \right) \mathbf{B} + \frac{p_1^*}{\left(\pi_{k_0^*} + \pi_{\ell_0^*}\right)^2 } \mathbf{B} \circ \mathbf{1}_{k_0^*, \ell_0^*},
\end{equation}

Similarly, the multiplier $ \frac{1}{\left(\pi_{k_0^*} + \pi_{\ell_0^*}\right)^2} $ on $ p_1^* \mathbf{B} \circ \mathbf{1}_{k_0^*, \ell_0^*} $ assures that we sample the same number of potential edges with $ \widetilde{\mathbf{B}}_1^* $ as we do with $ \mathbf{B}_1 $ in the baseline sampling scheme. In addition, to avoid over-sampling with respect to $ \mathbf{B} $, i.e., $ \widetilde{\mathbf{B}}_1^*[i, j] \leq \mathbf{B}[i, j] $ for any $ i, j \in \{1, \cdots, K \} $, we require
\begin{equation}
\label{eq:p11max}
p_1 \leq p_{11}^{\text{max}} = 1 - p_0 - \frac{p_1^*}{\left(\pi_{k_0^*} + \pi_{\ell_0^*}\right)^2 } + p_1^*.
\end{equation} 	

For any $ p_1 \in [p_1^*, p_{11}^{\text{max}}] $, we can have a better block recovery from $ \widetilde{\mathbf{B}}_1^* $ than $ \mathbf{B}_1 $, i.e., our Chernoff-optimal dynamic network sampling sheme is again better than the baseline sampling scheme in terms of block recovery. 

As an illustration, consider a 4-block SBM with initial sampling parameter $ p_0 = 0.01 $ and block connectivity probability matrix $ \mathbf{B} $ as in Eq.~\eqref{eq:exampleB}. Figure~\ref{fig:rho2} shows the Chernoff information $ \rho $ as in Eq.~\eqref{eq:rhoapprox} corresponding to $ \mathbf{B} $ as in Eq.~\eqref{eq:exampleB}, $ \mathbf{B}_0 $ as in Eq.~\eqref{eq:B0}, $ \mathbf{B}_1 $ as in Eq.~\eqref{eq:B1}, and $ \widetilde{\mathbf{B}}_1^* $ as in Eq.~\eqref{eq:B1tildestar} with dynamic network sampling parameter $ p_1 \in [p_1^*, p_{11}^{\text{max}}] $ where $ p_1^* $ is defined as in Assumption~\ref{cond:Chernoff-Superiority2} and $ p_{11}^{\text{max}} $ is defined as in Eq.~\eqref{eq:p11max}. In addition, Figure~\ref{fig:rho2a} assumes $ \boldsymbol{\pi} = (\frac{1}{4}, \frac{1}{4}, \frac{1}{4}, \frac{1}{4}) $ and Figure~\ref{fig:rho2b} assumes $ \boldsymbol{\pi} = (\frac{1}{8}, \frac{1}{8}, \frac{3}{8}, \frac{3}{8}) $. Note that for any $ p_1 \in [p_1^*, p_{11}^{\text{max}}] $ we have $\rho_{B} > \rho_{\widetilde{B}_1^*} > \rho_{B_1} > \rho_{B_0} $ and thus $ \mathbf{B} \succ \widetilde{\mathbf{B}}_1^* \succ \mathbf{B}_1 \succ \mathbf{B}_0 $. That is, the adopted Chernoff-optimal dynamic network sampling scheme can still yield better block recovery results, in terms of Chernoff information, given the same amout of resources.

\begin{figure}[h!]
	\subfigure[balanced: $ \boldsymbol{\pi} = (\frac{1}{4}, \frac{1}{4}, \frac{1}{4}, \frac{1}{4}) $ \label{fig:rho2a}]{
		\includegraphics[width=0.45\textwidth]{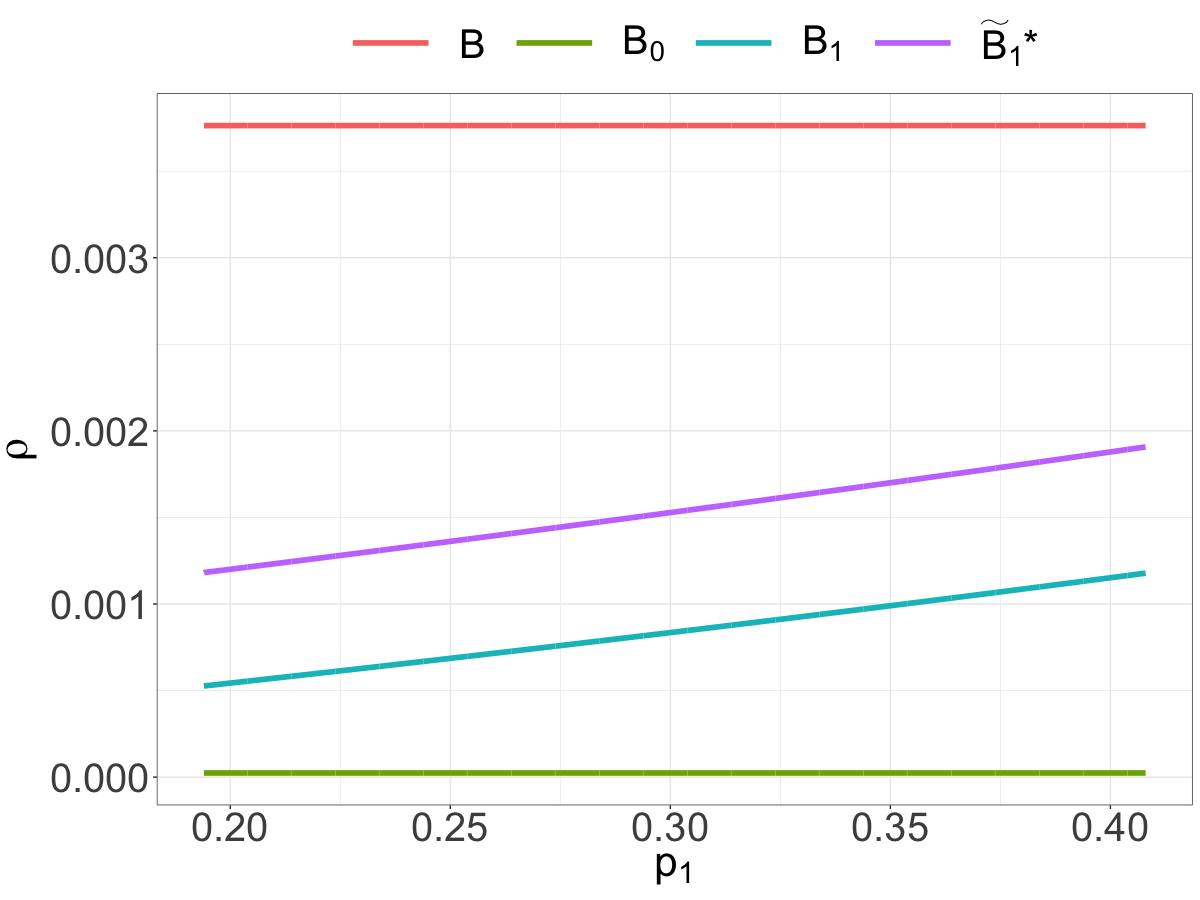}
	}
	\hfil
	\subfigure[unbalanced: $ \boldsymbol{\pi} = (\frac{1}{8}, \frac{1}{8}, \frac{3}{8}, \frac{3}{8}) $ \label{fig:rho2b}]{
		\includegraphics[width=0.45\textwidth]{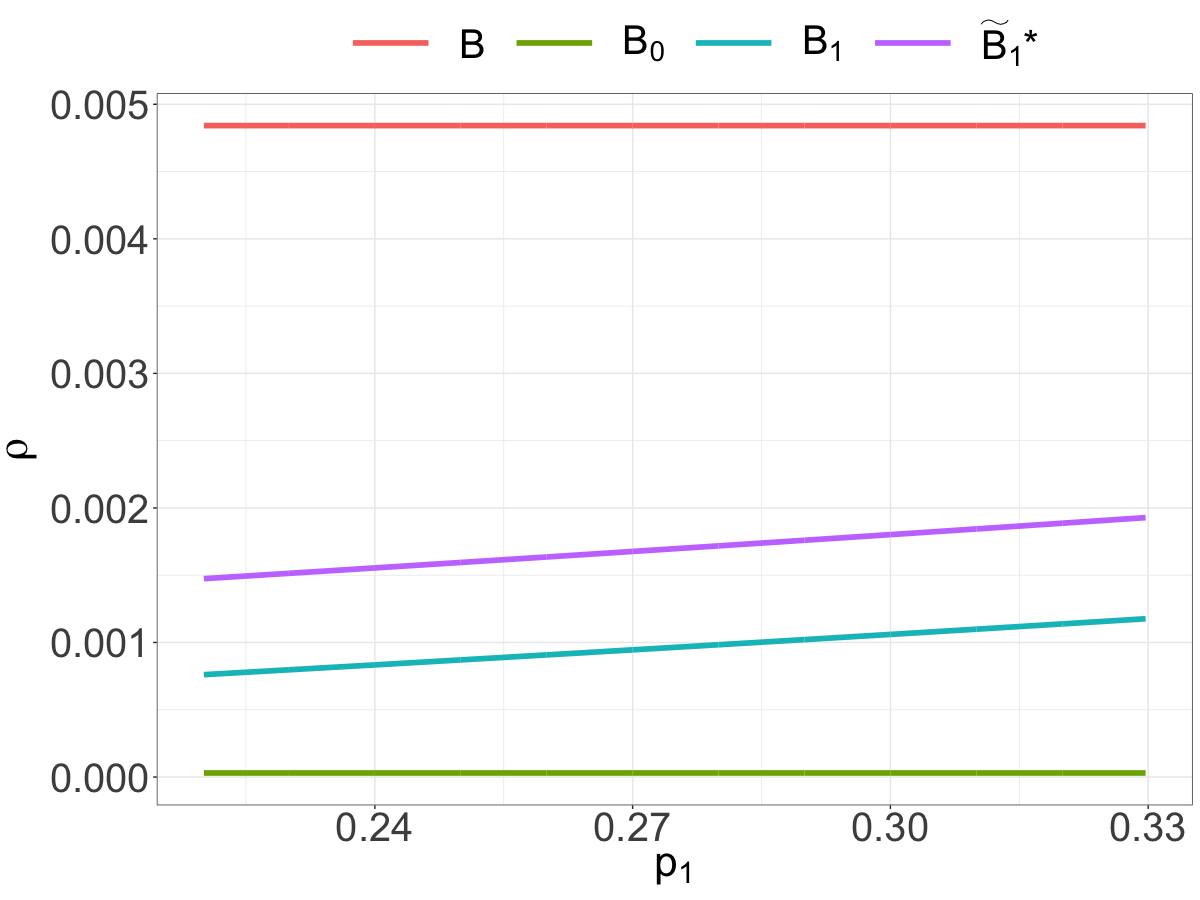}
	}
	\caption{Chernoff information $ \rho $ as in Eq.~\eqref{eq:rhoapprox} corresponding to $ \mathbf{B} $ as in Eq.~\eqref{eq:exampleB}, $ \mathbf{B}_0 $ as in Eq.~\eqref{eq:B0}, $ \mathbf{B}_1 $ as in Eq.~\eqref{eq:B1}, and $ \widetilde{\mathbf{B}}_1^* $ as in Eq.~\eqref{eq:B1tildestar} with initial sampling parameter $ p_0 = 0.01 $ and dynamic network sampling parameter $ p_1 \in [p_1^*, p_{11}^{\text{max}}] $ where $ p_1^* $ is defined as in Assumption~\ref{cond:Chernoff-Superiority2} and $ p_{11}^{\text{max}} $ is defined as in Eq.~\eqref{eq:p11max}.}
	\label{fig:rho2}
\end{figure}

Now we illustrate how the proposed Chernoff-optimal dynamic network sampling sheme can be migrated for real applications. We summarize the uniform dynamic sampling scheme (baseline) as Algorithm~\ref{algo:1} and our Chernoff-optimal dynamic network sampling scheme as Algorithm~\ref{algo:2}. Recall given potential edge set $ E $ and initial sampling parameter $ p_0 \in (0, 1) $, we have the initial edge set $ E_0 \subset E $ with $ \lvert E_0 \rvert = p_0 \lvert E \rvert $. The goal is to dynamically sample new edges from the potential edge set so that we can have a better block recovery given limited resources.

\begin{algorithm}
	\label{algo:1}
	\SetAlgoNoLine
	\KwIn{Number of vertices $ n $; potential edge set $ E = \{(i, j) \; | \; i, j \in \{1, \cdots, n \} \} $; initial edge set $ E_0 \subset E $; dynamic network sampling parameter $ p_1 \in \left(0, 1- \frac{\lvert E_0 \rvert}{\lvert E \rvert} \right) $}
	
	Construct dynamic edge set as
	\begin{equation*}
	E_1 = \left\{(i ,j) \; | \; (i ,j) \in E \setminus E_0 \right\} \qquad \text{with} \qquad \lvert E_1 \rvert = p_1 \lvert E \rvert.
	\end{equation*} \\
	
	Construct dynamic adjacency matrix as $ \mathbf{A} \in \{0, 1\}^{n \times n} $ where for any $ i, j \in \{1, \cdots, n \} $
	\begin{equation*}
	\mathbf{A}[i, j]  = 
	\begin{cases}
	1 & \text{if} \;\; (i, j) \in E_0 \bigcup E_1 \;\; \text{or} \;\; (j, i) \in E_0 \bigcup E_1 \\
	0 & \text{otherwise}
	\end{cases}
	.
	\end{equation*} \\
	
	Estimate dynamic latent positions as $ \mathbf{\widehat{X}} \in \mathbb{R}^{n \times \widehat{d}} $ using ASE of $ \mathbf{A} $ where $ \widehat{d} $ is chosen as in Remark~\ref{remark:dhat}. \\
	
	Cluster $ \mathbf{\widehat{X}} $ using Gaussian mixture modeling (GMM) to estimate the block assignments as $ \boldsymbol{\widehat{\tau}} \in \{1, \cdots, \widehat{K} \}^{n} $ where $ \widehat{K} $ is chosen via Bayesian Information Criterion (BIC).
	
	\KwOut{Block assignments $ \boldsymbol{\widehat{\tau}} $.}
	
	\caption{Uniform dynamic network sampling scheme (baseline)}
\end{algorithm}

\begin{algorithm}
	\label{algo:2}
	\SetAlgoNoLine
	\KwIn{Number of vertices $ n $; potential edge set $ E = \{(i, j) \; | \; i, j \in \{1, \cdots, n \} \} $; initial edge set $ E_0 \subset E $; dynamic network sampling parameter $ p_1 \in \left(0, 1- \frac{\lvert E_0 \rvert}{\lvert E \rvert} \right) $}
	
	Construct dynamic adjacency matrix as $ \mathbf{A} \in \{0, 1\}^{n \times n} $ where for any $ i, j \in \{1, \cdots, n \} $
	\begin{equation*}
	\mathbf{A}[i, j]  = 
	\begin{cases}
	1 & \text{if} \;\; (i, j) \in E_0 \;\; \text{or} \;\; (j, i) \in E_0 \\
	0 & \text{otherwise}
	\end{cases}
	.
	\end{equation*} \\
	
	Estimate dynamic latent positions as $ \mathbf{\widehat{X}} \in \mathbb{R}^{n \times \widehat{d}} $ using ASE of $ \mathbf{A} $ where $ \widehat{d} $ is chosen as in Remark~\ref{remark:dhat}. \\
	
	Cluster $ \mathbf{\widehat{X}} $ using GMM to estimate the initial block assignments as $ \boldsymbol{\widehat{\xi}} \in \{1, \cdots, \widehat{K} \}^{n} $ where $ \widehat{K} $ is chosen via BIC. \\
	
	Estimate the dynamic block assignment probability vector as $ \boldsymbol{\widehat{\pi}} \in (0, 1)^K $ where for $ k \in \{1, \cdots, K \} $
	\begin{equation*}
	\widehat{\pi}_k = \frac{1}{n} \sum_{i=1}^{n} \mathbf{1} \{\boldsymbol{\widehat{\xi}}_i = k \}.
	\end{equation*} \\
	
	Estimate the dynamic block connectivity probability matrix as
	\begin{equation*}
	\mathbf{\widehat{B}} = \boldsymbol{\widehat{\mu}} \mathbf{I}_{d_+ d_-} \boldsymbol{\widehat{\mu}}^\top \in [0,1]^{\widehat{K} \times \widehat{K}},
	\end{equation*}
	where $ \boldsymbol{\widehat{\mu}} \in \mathbb{R}^{\widehat{K} \times \widehat{d}} $ is the estimated means of all clusters. \\
	
	Find the Chernoff-active blocks as 
	\begin{equation*}
	\left(k^*, \ell^* \right) = \arg \min_{k \neq l} C_{k ,\ell} \left(\mathbf{\widehat{B}}, \boldsymbol{\widehat{\pi}} \right).
	\end{equation*} \\
	
	Construct dynamic edge set as
	\begin{equation*}
	\begin{split}
	E_1 \subseteq E_* \qquad & \text{with} \qquad \lvert E_1 \rvert = \min \left\{p_1 \lvert E \rvert \left(\widehat{\pi}_{k^*} + \widehat{\pi}_{\ell^*}\right)^2, \lvert E_* \rvert \right\}, \\
	E_{11} \subset E \setminus \left(E_0 \bigcup E_1 \right) \qquad & \text{with} \qquad \lvert E_{11} \rvert = p_1 \lvert E \rvert - \lvert E_1 \rvert,
	\end{split}
	\end{equation*}
	where 
	\begin{equation*}
	E_* = \left\{(i ,j) \; | \; (i ,j) \in E \setminus E_0 \; \text{and} \; \widehat{\xi}_i, \widehat{\xi}_j \in \{k^*, \ell^* \} \right\}.
	\end{equation*} \\
	
	Update dynamic adjacency matrix as $ \mathbf{A} \in \{0, 1\}^{n \times n} $ where for any $ i, j \in \{1, \cdots, n \} $
	\begin{equation*}
	\mathbf{A}[i, j]  = 
	\begin{cases}
	1 & \text{if} \;\; (i, j) \in E_0 \bigcup E_1 \bigcup E_{11} \;\; \text{or} \;\; (j, i) \in E_0 \bigcup E_1 \bigcup E_{11} \\
	0 & \text{otherwise}
	\end{cases}
	.
	\end{equation*} \\
	
	Update dynamic latent positions as $ \mathbf{\widehat{X}} \in \mathbb{R}^{n \times \widehat{d}} $ using ASE of updated $ \mathbf{A} $ where $ \widehat{d} $ is chosen as in Remark~\ref{remark:dhat}. \\
	
	Cluster $ \mathbf{\widehat{X}} $ using GMM to estimate the block assignments as $ \boldsymbol{\widehat{\tau}} \in \{1, \cdots, \widehat{K} \}^{n} $ where $ \widehat{K} $ is chosen via BIC.
	
	\KwOut{Block assignments $ \boldsymbol{\widehat{\tau}} $.}
	
	\caption{Chernoff-optimal dynamic network sampling scheme}
\end{algorithm}

\section{Experiments}

\label{sec:5}

\subsection{Simulations}

In addition to Chernoff analysis, we also evalute our Chernoff-optimal dynamic network sampling sheme via simulations. In particular, consider the 4-block SBM parameterized by block connectivity probability matrix $ \mathbf{B} $ as in Eq.~\eqref{eq:exampleB} and dynamic network sampling parameter $ p_1 \in (0, p_{11}^{\text{max}}] $ where $ p_{11}^{\text{max}} $ is defined as in Eq.~\eqref{eq:p11max}. We fix initial sampling parameter $ p_0 = 0.01 $. For each $ p_1 \in (0, p_1^*) $ where $ p_1^* $ is defined as in Assumption~\ref{cond:Chernoff-Superiority2}, we simulate 50 adjacency matrices with $ n = 12000 $ vertices from $ \mathbf{B}_1 $ as in Eq.~\eqref{eq:B1} and $ \widetilde{\mathbf{B}}_1 $ as in Eq.~\eqref{eq:B1tilde} respectively. For each $ p_1 \in [p_1^*, p_{11}^{\text{max}}] $, we simulate 50 adjacency matrices with $ n = 12000 $ vertices from $ \mathbf{B}_1 $ as in Eq.~\eqref{eq:B1} and $ \widetilde{\mathbf{B}}_1^* $ as in Eq.~\eqref{eq:B1tildestar} respectively. In addition, Figure~\ref{fig:sim0a} assumes $ \boldsymbol{\pi} = (\frac{1}{4}, \frac{1}{4}, \frac{1}{4}, \frac{1}{4}) $, i.e., 3000 vertices in each block, and Figure~\ref{fig:sim0b} assumes $ \boldsymbol{\pi} = (\frac{1}{8}, \frac{1}{8}, \frac{3}{8}, \frac{3}{8}) $, i.e., 1500 vertices in two of the blocks and 4500 vertices in the other two blocks. We then apply ASE $ \circ $ GMM (Step 3 and 4 in Algorithm~\ref{algo:1}) to recover block assignments and adopt adjusted Rand index (ARI) to measure the performance. Figure~\ref{fig:sim0} shows ARI (\texttt{mean$ \pm $stderr}) associated with $ \mathbf{B}_1 $ for $ p_1 \in (0, p_{11}^{\text{max}}] $, $ \widetilde{\mathbf{B}}_1 $ for $ p_1 \in (0, p_1^*) $, and $ \widetilde{\mathbf{B}}_1^* $ for $ p_1 \in [p_1^*, p_{11}^{\text{max}}] $ where the dashed lines denote $ p_1^* $. Note that we can have a better block recovery from $ \widetilde{\mathbf{B}}_1 $ and $ \widetilde{\mathbf{B}}_1^* $ than $ \mathbf{B}_1 $, which argee with our results from Chernoff analysis.

\begin{figure}[h!]
	\subfigure[balanced: $ \boldsymbol{\pi} = (\frac{1}{4}, \frac{1}{4}, \frac{1}{4}, \frac{1}{4}) $ \label{fig:sim0a}]{
		\includegraphics[width=0.45\textwidth]{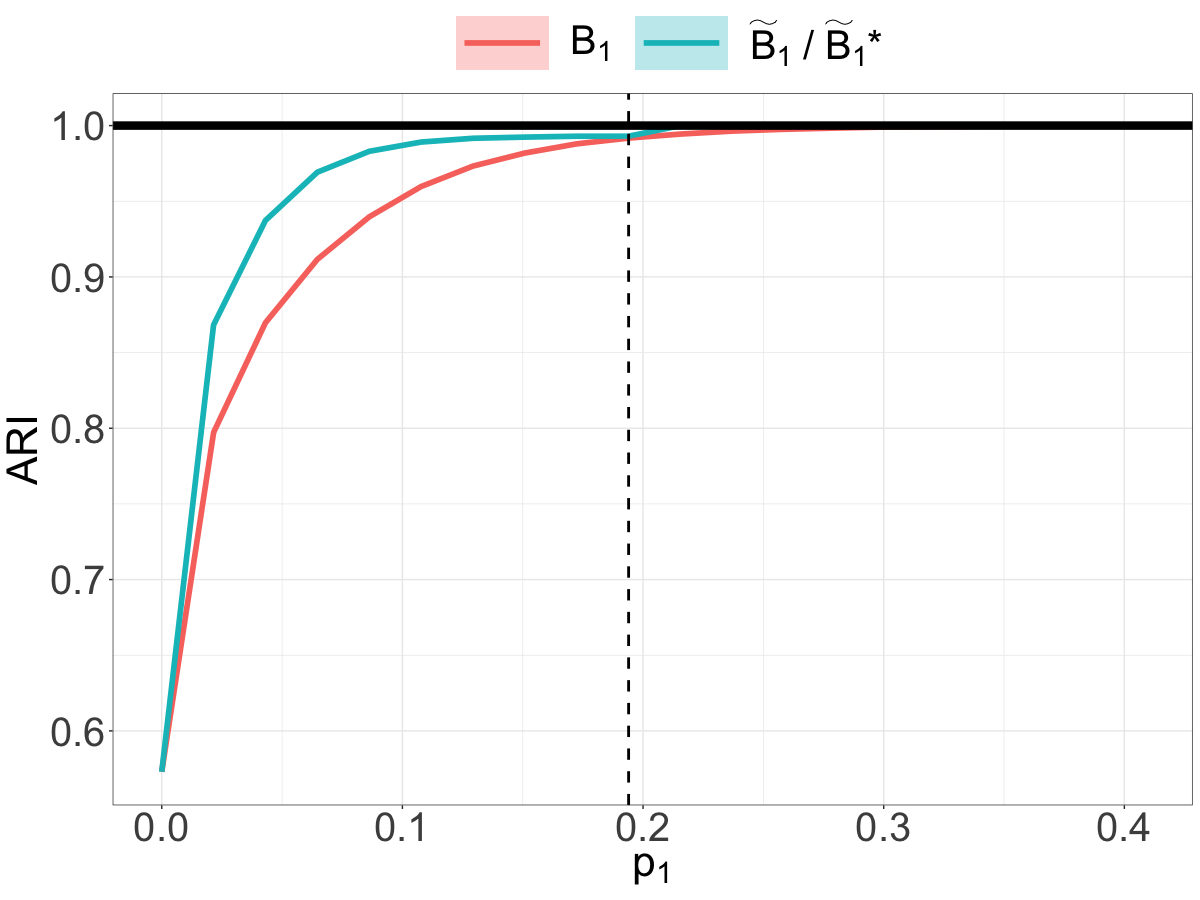}
	}
	\hfil
	\subfigure[unbalanced: $ \boldsymbol{\pi} = (\frac{1}{8}, \frac{1}{8}, \frac{3}{8}, \frac{3}{8}) $ \label{fig:sim0b}]{
		\includegraphics[width=0.45\textwidth]{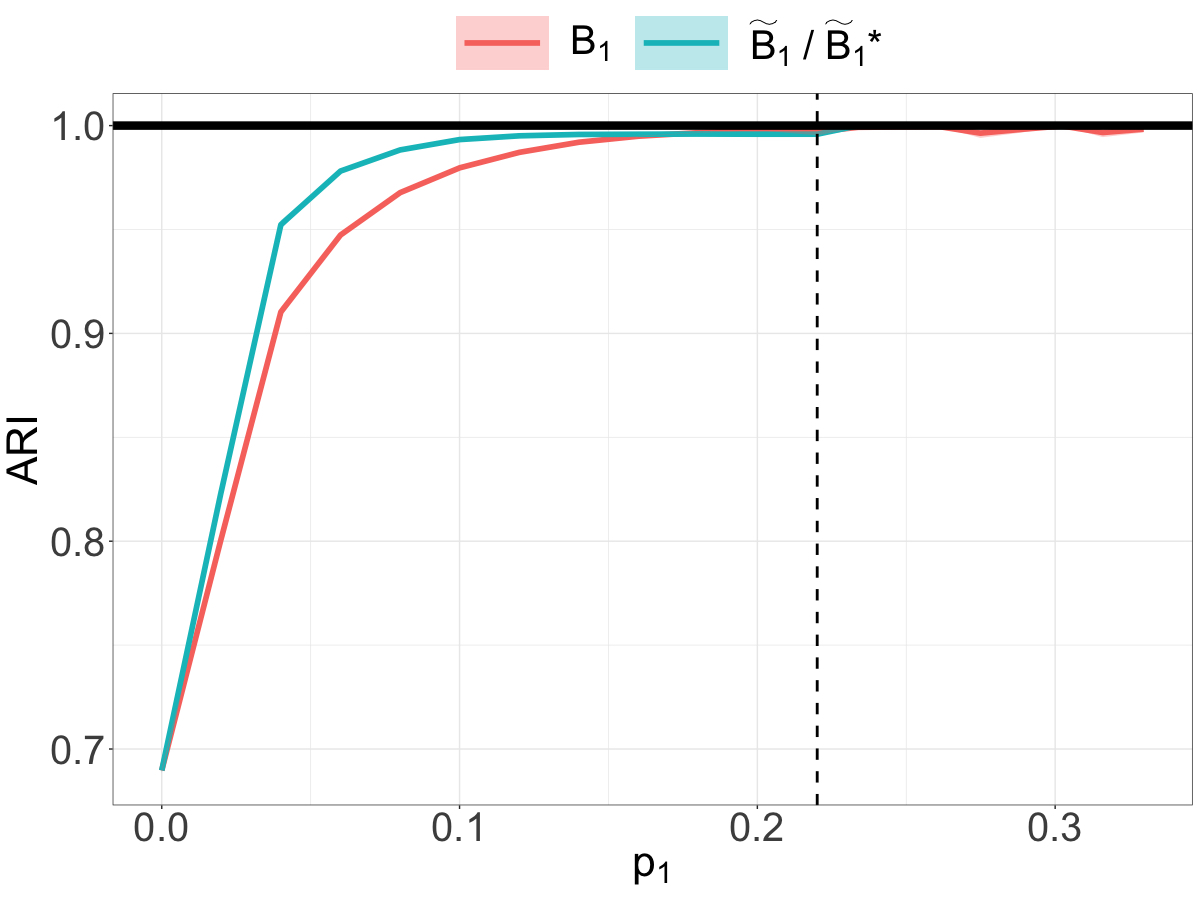}
	}
	\caption{Simulations for 4-block SBM parameterized by block connectivity probability matrix $ \mathbf{B} $ as in Eq.~\eqref{eq:exampleB} with initial sampling parameter $ p_0 = 0.01 $ and dynamic network sampling parameter $ p_1 \in (0, p_{11}^{\text{max}}] $ where $ p_{11}^{\text{max}} $ is defined as in Eq.~\eqref{eq:p11max}. The dashed lines denote $ p_1^* $ which is defined as in Assumption~\ref{cond:Chernoff-Superiority2}.}
	\label{fig:sim0}
\end{figure}

Now we compare the performance of Algorithms~\ref{algo:1} and~\ref{algo:2} by actual block recovery results. In particular, we start with the 4-block SBM parameterized by block connectivity probability matrix $ \mathbf{B} $ as in Eq.~\eqref{eq:exampleB}. We consider dynamic network sampling parameter $ p_1 \in (0, 1-p_0) $ where $ p_0 $ is the initial sampling parameter. For each $ p_1 $, we simulate 50 adjacency matrices with $ n = 4000 $ vertices and retrieve associated potential edge sets. We fix initial sampling parameter $ p_0 = 0.15 $ and randomly sample initial edge sets. We then apply both algorithms to estimate the block assignments and adopt ARI to measure the performance. Figure~\ref{fig:sim1} shows ARI (\texttt{mean$ \pm $stderr}) of two algorithms for $ p_1 \in (0, 0.85) $ where Figure~\ref{fig:sim1a} assumes $ \boldsymbol{\pi} = (\frac{1}{4}, \frac{1}{4}, \frac{1}{4}, \frac{1}{4}) $, i.e., 1000 vertices in each block, and Figure~\ref{fig:sim1b} assumes $ \boldsymbol{\pi} = (\frac{1}{8}, \frac{1}{8}, \frac{3}{8}, \frac{3}{8}) $, i.e., 500 vertices in two of the blocks and 1500 vertices in the other two blocks. Note that both algorithms tend to have a better performance as $ p_1 $ increases, i.e., as we sample more edges, and Algorithm~\ref{algo:2} can always recover more accurate block structure than Algorithm~\ref{algo:1}. That is, given the same amout of resources, the proposed Chernoff-optimal dynamic network sampling scheme can yield better block recovery results. In other words, to reach the same level of performance, in terms of the empirical clustering results, the proposed Chernoff-optimal dynamic network sampling scheme needs less resources.

\begin{figure}[h!]
	\subfigure[balanced: $ \boldsymbol{\pi} = (\frac{1}{4}, \frac{1}{4}, \frac{1}{4}, \frac{1}{4}) $ \label{fig:sim1a}]{
		\includegraphics[width=0.45\textwidth]{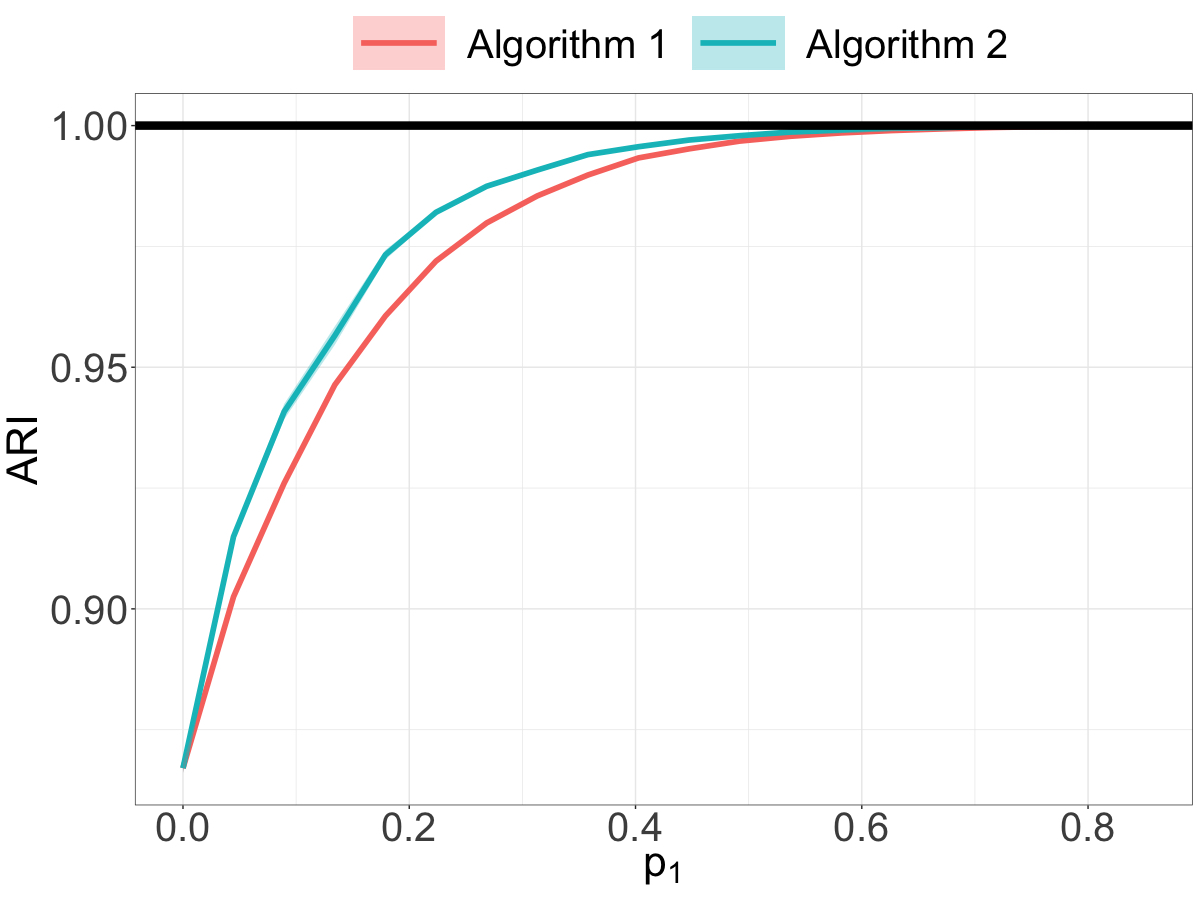}
	}
	\hfil
	\subfigure[unbalanced: $ \boldsymbol{\pi} = (\frac{1}{8}, \frac{1}{8}, \frac{3}{8}, \frac{3}{8}) $ \label{fig:sim1b}]{
		\includegraphics[width=0.45\textwidth]{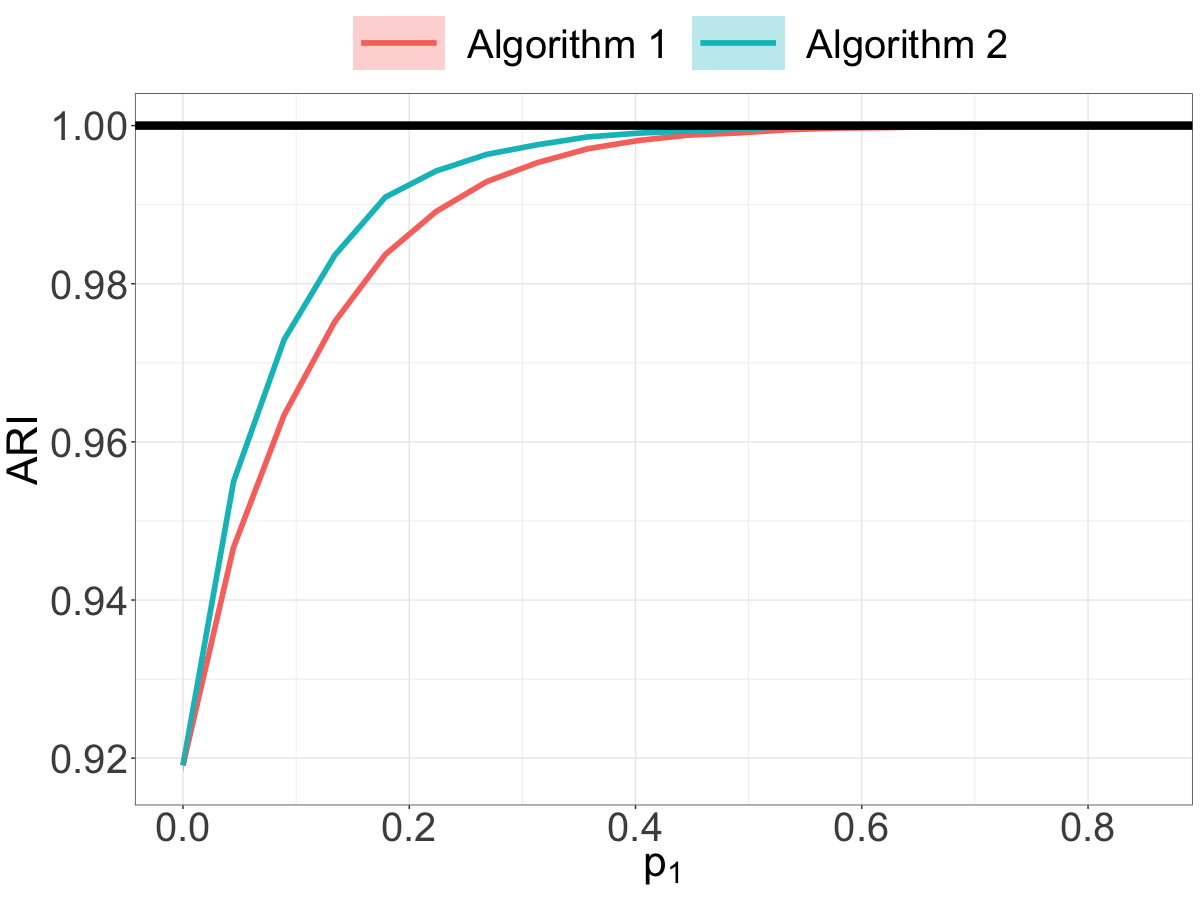}
	}
	\caption{Simulations for 4-block SBM parameterized by block connectivity probability matrix $ \mathbf{B} $ as in Eq.~\eqref{eq:exampleB} with initial sampling parameter $ p_0 = 0.15 $ and dynamic network sampling parameter $ p_1 \in (0, 0.85) $.}
	\label{fig:sim1}
\end{figure}	

\subsection{Real Data}

We also evaluate the performance of Algorithms~\ref{algo:1} and~\ref{algo:2} for real application. We conduct real data experiments on a diffusion MRI connectome dataset~\cite{Priebe2019}. There are 114 graphs (connectomes) estimated by the NDMG pipeline~\cite{Kiar2018} in this dataset. Each vertex in these graphs (the number of vertices $ n $ varies from 23728 to 42022) has a \{Left,~Right\} hemisphere label and a \{Gray,~White\} tissue label. We consider the potential 4 blocks as \{LG,~LW,~RG,~RW\} where L and R denote the Left and Right hemisphere label, G and W denote the Gray and White tissue label. Here we consider initial sampling parameter $ p_0 = 0.25 $ and dynamic network sampling parameter $ p_1 = 0.25 $. Let $ \Delta = \text{ARI(Algo2)} - \text{ARI(Algo1)} $ where ARI(Algo1) and ARI(Algo2) denotes the ARI when we apply Algorithms~\ref{algo:1} and~\ref{algo:2} respectively. The following hypothesis testing yields \texttt{p-value=0.0184}.
\begin{equation}
H_0: \; \text{median}(\Delta) \leq 0 \qquad \text{v.s.} \qquad H_A: \; \text{median}(\Delta) > 0.
\end{equation}

\begin{figure}[h!]
	\subfigure[Boxplot \label{fig:real1a}]{
		\includegraphics[width=0.45\textwidth]{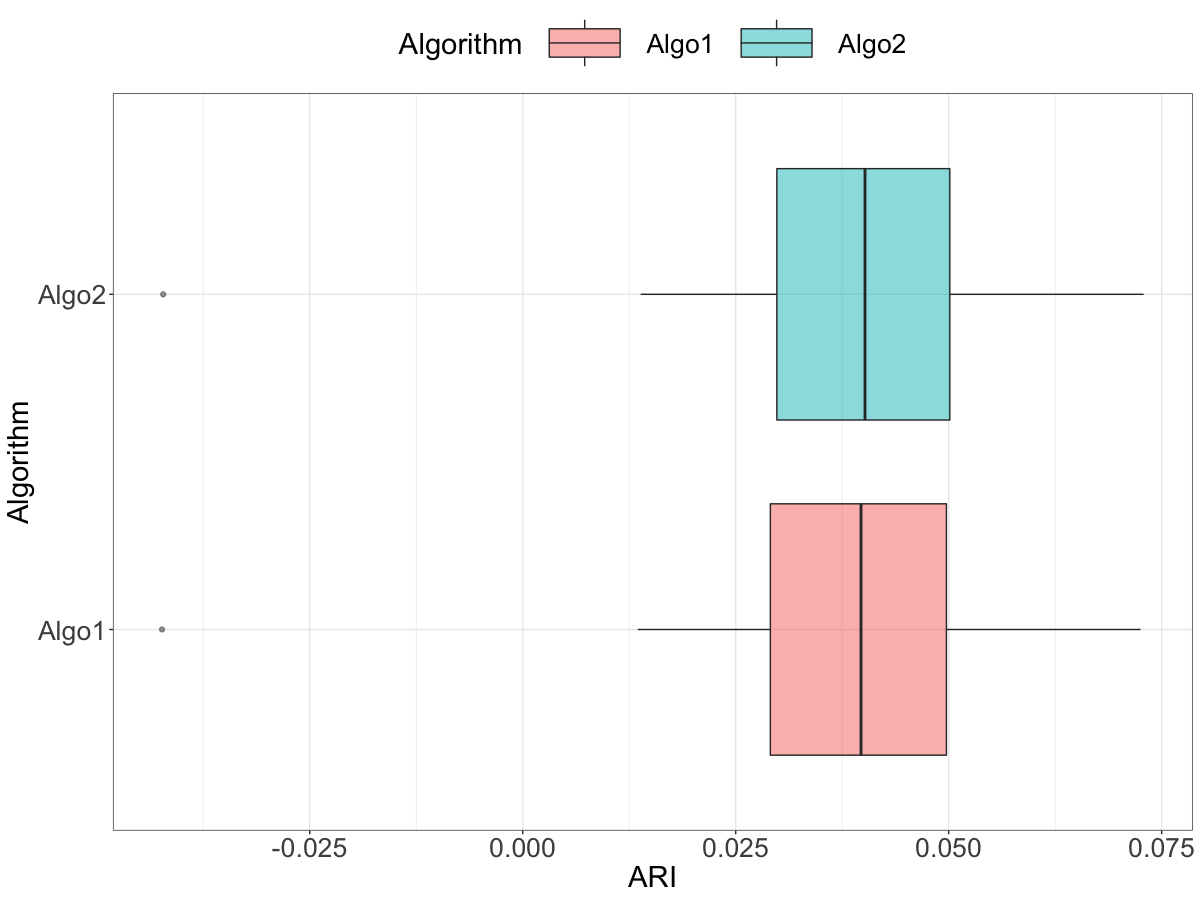}
	}
	\hfil
	\subfigure[Histogram \label{fig:real1b}]{
		\includegraphics[width=0.45\textwidth]{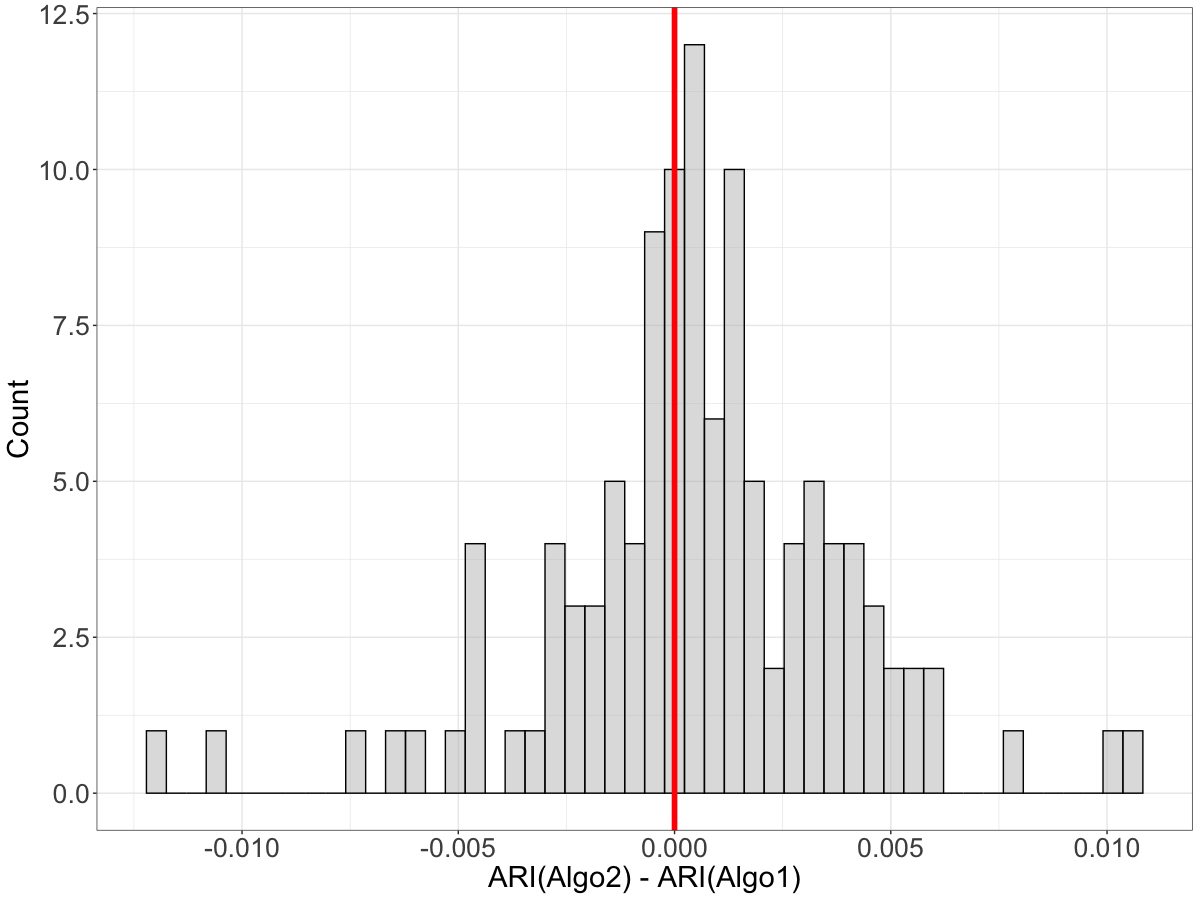}
	}
	\caption{Algorithms' comparative performance on diffusion MRI connectome data via ARI with initial sampling parameter $ p_0 = 0.25 $ and dynamic network sampling parameter $ p_1 = 0.25 $.}
	\label{fig:real1}
\end{figure}

Furthermore, we test our algorithms on a Microsoft bing entity dataset~\cite{Agterberg2020}. There are 2 graphs in this dataset where each has 13535 vertices. We treat block assignments estimated from the complete graph as ground truth. We consider initial sampling parameter $ p_0 \in \left\{0.2, \; 0.3 \right\} $ and dynamic network sampling parameter $ p_1 \in \left\{0, \; 0.05, \; 0.1, \; 0.15, \; 0.2 \right\} $. For each $ p_1 $, we sample 100 times and compare the overall performance of Algorithm~\ref{algo:1} and~\ref{algo:2}. Figure~\ref{fig:real2} shows the results where ARI is reported as \texttt{mean($\pm$stderr)}.

\begin{figure}[h!]
	\subfigure[$ p_0 = 0.2, \; p_1 \in \left\{0, \; 0.05, \; 0.1, \; 0.15, \; 0.2 \right\}  $ \label{fig:real2a}]{
		\includegraphics[width=0.45\textwidth]{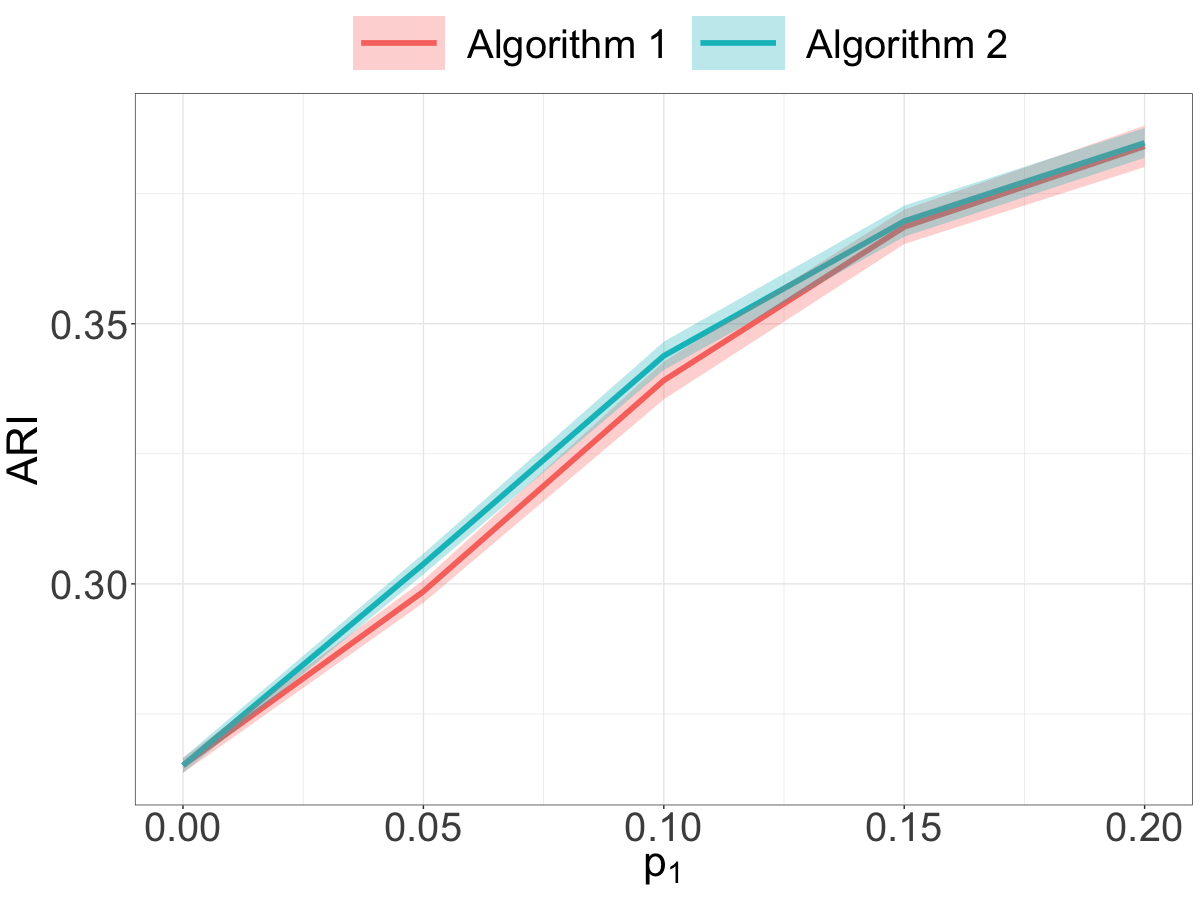}
	}
	\hfil
	\subfigure[$ p_0 = 0.3, \; p_1 \in \left\{0, \; 0.05, \; 0.1, \; 0.15, \; 0.2 \right\}  $ \label{fig:real2b}]{
		\includegraphics[width=0.45\textwidth]{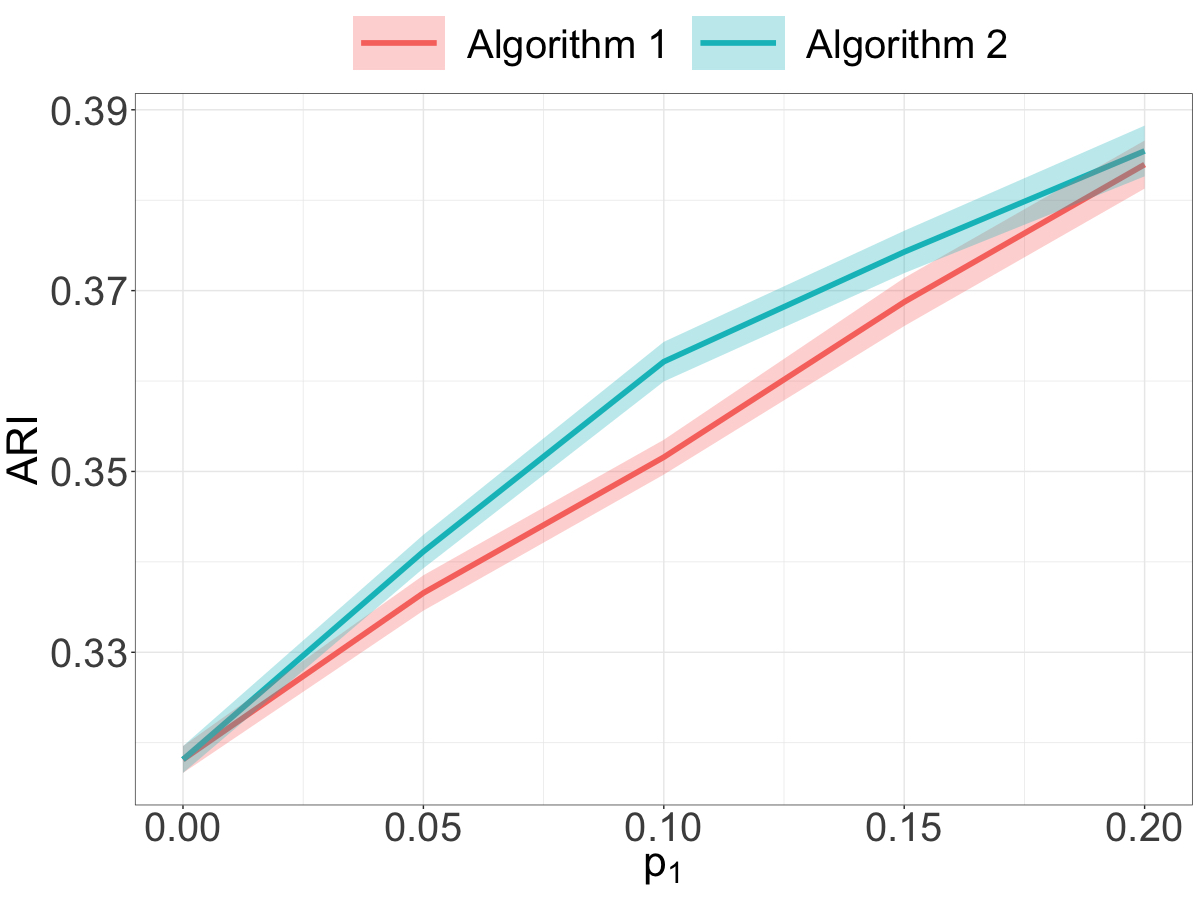}
	}
	\caption{Algorithms' comparative performance on Microsoft bing entity data via ARI with different initial sampling parameter $ p_0 $ and dynamic network sampling parameter $ p_1 $.}
	\label{fig:real2}
\end{figure}

We also conduct real data experiments with 2 social network datasets. 
\begin{itemize}
	\item LastFM asia social network data set~\cite{Leskovec2014,Rozemberczki2020}: Vertices (the number of vertices $ n = 7624 $) represent LastFM users from asian countries and edges (the number of edges $ e = 27806 $) represent mutual follower relationships. We treat 18 different location of users, which are derived from the country field for each user, as the potential block. 
	\item Facebook large page-page network data set~\cite{Leskovec2014,Rozemberczki2019}: Vertices (the number of vertices $ n = 22470 $) represent official Facebook pages and edges (the number of edges $ e = 171002 $) represent mutual likes. We treat 4 page types \{Politician,~Governmental~Organization,~Television~Show,~Company\}, which are defined by Facebook, as the potential block.
\end{itemize}

We consider initial sampling parameter $ p_0 \in \left\{0.15, \; 0.35 \right\} $ and dynamic network sampling parameter $ p_1 \in \left\{0.05, \; 0.1, \; 0.15, \; 0.2, \; 0.25 \right\} $. For each $ p_1 $, we sample 100 times and compare the overall performance of Algorithm~\ref{algo:1} and~\ref{algo:2}. Figure~\ref{fig:real3} shows the results where ARI is reported as \texttt{mean($\pm$stderr)}. Again it suggests that given the same amout of resources, the proposed Chernoff-optimal dynamic network sampling scheme can yield better block recovery results. In other words, to reach the same level of performance, in terms of the empirical clustering results, the proposed Chernoff-optimal dynamic network sampling scheme needs less resources.

\begin{figure}[h!]
	\subfigure[LastFM: $ p_0 = 0.15, \; p_1 \in \left\{0.05, \; 0.1, \; 0.15, \; 0.2, \; 0.25 \right\}  $ \label{fig:real3a}]{
		\includegraphics[width=0.45\textwidth]{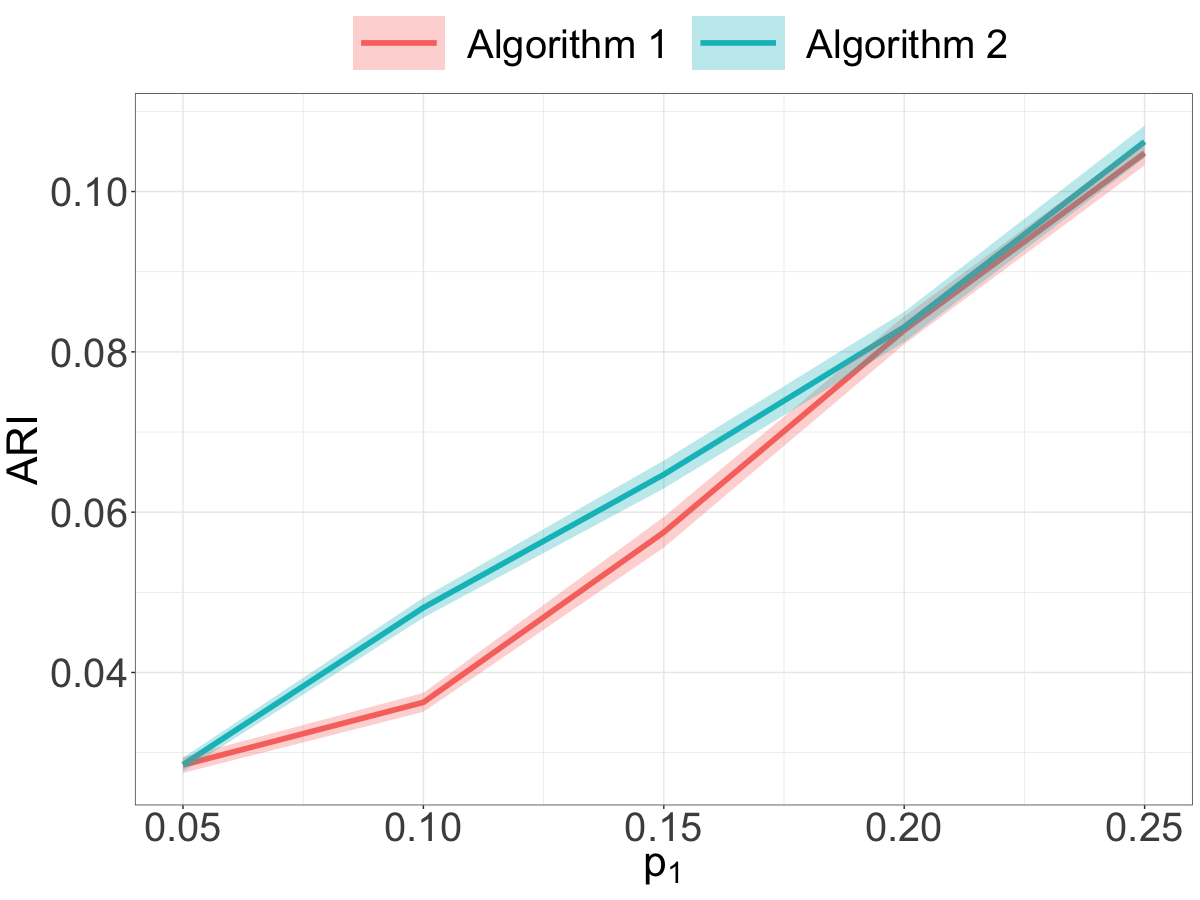}
	}
	\hfil
	\subfigure[Facebook: $ p_0 = 0.35, \; p_1 \in \left\{0.05, \; 0.1, \; 0.15, \; 0.2, \; 0.25 \right\}  $ \label{fig:real3b}]{
		\includegraphics[width=0.45\textwidth]{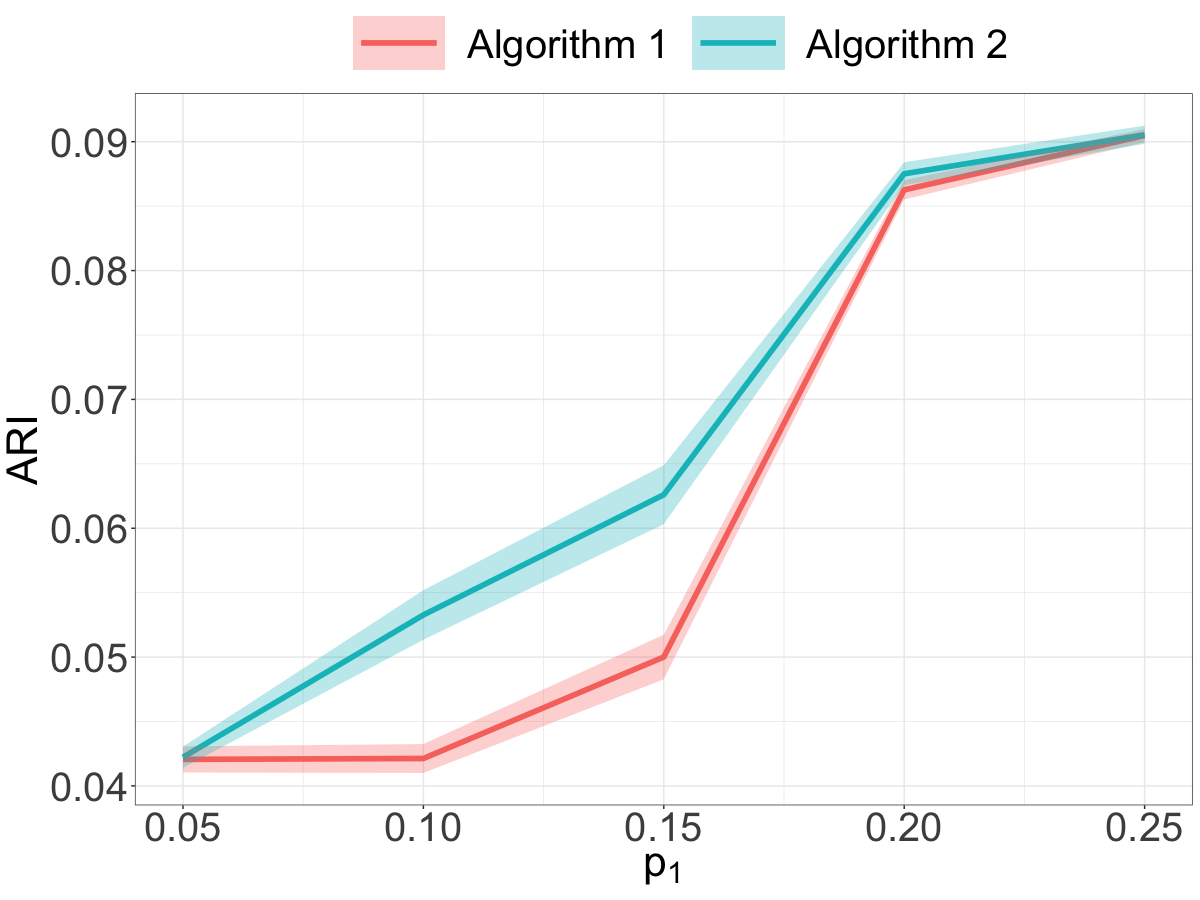}
	}
	\caption{Algorithms' comparative performance on social network data via ARI with different initial sampling parameter $ p_0 $ and dynamic network sampling parameter $ p_1 $.}
	\label{fig:real3}
\end{figure}

\section{Discussion}

\label{sec:6}

We propose a dynamic network sampling scheme to optimize block recovery for SBM when we only have a limited budget to observe a partial graph. Theoretically, we provide justification of our proposed Chernoff-optimal dynamic sampling scheme via the Chernoff information. Practically, we evaluate the performance, in terms of block recovery (community detection), of our method on several real datasets including diffusion MRI connectome dataset, Microsoft bing entity graph transitions dataset and social network datasets. Both theoretically and practically results suggest that our method can identify vertices that have the most impact on block structure and only check whether there are edges between them to save significant resources but still recover the block structure.

As the Chernoff-optimal dynamic sampling scheme depends on the initial clustering results to identify Chernoff-active blocks and construct dynamic edge set. Thus the performance could be impacted if the initial clustering is not very ideal. One of the future direction is to design certain strategy to reduce this dependency such that the proposed scheme is more robust.

\section*{Appendix}

\begin{proof}[Proof of Theorem~\ref{thm:Chernoff-Superiority}.]
	Let $ \mathbf{B} = \mathbf{U} \mathbf{S} \mathbf{U}^\top $ be the spectral decomposition of $ \mathbf{B} $ and $ \mathbf{B}^\prime = p \mathbf{B} $ with $ p \in (0, 1) $. Then we have
	\begin{equation}
	\label{eq:Bprime}
	\mathbf{B}^\prime = \mathbf{U}^\prime \mathbf{S} \left(\mathbf{U}^\prime\right)^\top \qquad \text{where} \qquad \mathbf{U}^\prime = \sqrt{p} \mathbf{U}.
	\end{equation}
	
	By Remark~\ref{remark:GRDPG-SBM}, to represent these two SBMs parametrized by two block connectivity matrices $ \mathbf{B} $ and $ \mathbf{B}^\prime $ respectively (with the same block assignment probability vector $ \boldsymbol{\pi} $) in the GRDPG models, we can take
	\begin{equation}
	\label{eq:nunuprime}
	\begin{split}
	\boldsymbol{\nu} & = \begin{bmatrix}
	\boldsymbol{\nu}_1 & \cdots & \boldsymbol{\nu}_K
	\end{bmatrix}^\top = \mathbf{U} |\mathbf{S}|^{1/2} \in \mathbb{R}^{K \times d}, \\
	\boldsymbol{\nu}^\prime & = \begin{bmatrix}
	\boldsymbol{\nu}_1^\prime & \cdots & \boldsymbol{\nu}_K^\prime
	\end{bmatrix}^\top = \mathbf{U}^\prime |\mathbf{S}|^{1/2} = \sqrt{p} \mathbf{U} |\mathbf{S}|^{1/2} = \sqrt{p} \boldsymbol{\nu} \in \mathbb{R}^{K \times d}.
	\end{split}
	\end{equation}
	
	Then for any $ k \in \{1, \cdots, K \} $, we have $ \boldsymbol{\nu}_k^\prime = \sqrt{p} \boldsymbol{\nu}_k \in \mathbb{R}^{d} $. By Theorem~\ref{thm:CLT-ASE-SBM}, we have
	\begin{equation}
	\begin{split}
	\boldsymbol{\Delta} & = \sum_{k=1}^{K} \pi_k \boldsymbol{\nu}_k \boldsymbol{\nu}_k^\top \in \mathbb{R}^{d \times d}, \\
	\boldsymbol{\Delta}^\prime & = \sum_{k=1}^{K} \pi_k \boldsymbol{\nu}_k^\prime \left(\boldsymbol{\nu}_k^\prime\right)^\top = p \sum_{k=1}^{K} \pi_k \boldsymbol{\nu}_k \boldsymbol{\nu}_k^\top = p \boldsymbol{\Delta} \in \mathbb{R}^{d \times d}.
	\end{split}
	\end{equation}
	
	Note that $ \mathbf{B} $ and $ \mathbf{B}^\prime $ have the same eigenvalues, thus we have $ \mathbf{I}_{d_+ d_-} = \mathbf{I}_{d_+ d_-}^\prime $. See also Lemma 2~\cite{Gallagher2019}. Then for $ k \in \{1, \cdots, K \} $, we have
	\begin{equation}
	\begin{split}
	\boldsymbol{\Sigma}_k & = \mathbf{I}_{d_+ d_-} \boldsymbol{\Delta}^{-1} \mathbb{E} \left[ \left(\boldsymbol{\nu}_k^\top \mathbf{I}_{d_+ d_-} \boldsymbol{\nu} \right) \left(1-\boldsymbol{\nu}_k^\top \mathbf{I}_{d_+ d_-} \boldsymbol{\nu} \right) \boldsymbol{\nu} \boldsymbol{\nu}^\top \right] \boldsymbol{\Delta}^{-1} \mathbf{I}_{d_+ d_-} \\
	& = \mathbf{I}_{d_+ d_-} \boldsymbol{\Delta}^{-1} \left[\sum_{\ell=1}^{K} \pi_{\ell} \left(\boldsymbol{\nu}_k^\top \mathbf{I}_{d_+ d_-} \boldsymbol{\nu}_{\ell} \right) \left(1-\boldsymbol{\nu}_k^\top \mathbf{I}_{d_+ d_-} \boldsymbol{\nu}_{\ell} \right) \boldsymbol{\nu}_{\ell} \boldsymbol{\nu}_{\ell}^\top \right] \boldsymbol{\Delta}^{-1} \mathbf{I}_{d_+ d_-} \in \mathbb{R}^{d \times d}, \\[1em]
	\boldsymbol{\Sigma}_k^{\prime} & = \frac{1}{p^2} \mathbf{I}_{d_+ d_-} \boldsymbol{\Delta}^{-1} \left[p^2 \sum_{\ell=1}^{K} \pi_{\ell} \left(\boldsymbol{\nu}_k^\top \mathbf{I}_{d_+ d_-} \boldsymbol{\nu}_{\ell} \right) \left(1-p \boldsymbol{\nu}_k^\top \mathbf{I}_{d_+ d_-} \boldsymbol{\nu}_{\ell} \right) \boldsymbol{\nu}_{\ell} \boldsymbol{\nu}_{\ell}^\top \right] \boldsymbol{\Delta}^{-1} \mathbf{I}_{d_+ d_-} \\
	& = \mathbf{I}_{d_+ d_-} \boldsymbol{\Delta}^{-1} \left[p \sum_{\ell=1}^{K} \pi_{\ell} \left(\boldsymbol{\nu}_k^\top \mathbf{I}_{d_+ d_-} \boldsymbol{\nu}_{\ell} \right) \left(1-\boldsymbol{\nu}_k^\top \mathbf{I}_{d_+ d_-} \boldsymbol{\nu}_{\ell} \right) \boldsymbol{\nu}_{\ell} \boldsymbol{\nu}_{\ell}^\top \right] \boldsymbol{\Delta}^{-1} \mathbf{I}_{d_+ d_-} \\
	& \qquad + \mathbf{I}_{d_+ d_-} \boldsymbol{\Delta}^{-1} \left[(1-p) \sum_{\ell=1}^{K} \pi_{\ell} \left(\boldsymbol{\nu}_k^\top \mathbf{I}_{d_+ d_-} \boldsymbol{\nu}_{\ell} \right) \boldsymbol{\nu}_{\ell} \boldsymbol{\nu}_{\ell}^\top \right] \boldsymbol{\Delta}^{-1} \mathbf{I}_{d_+ d_-} \\
	& = p \boldsymbol{\Sigma}_k + \mathbf{V}^\top \mathbf{D}_k(p) \mathbf{V} \in \mathbb{R}^{d \times d},
	\end{split}
	\end{equation}
	where
	\begin{equation}
	\begin{split}
	\mathbf{V} & = \boldsymbol{\nu} \boldsymbol{\Delta}^{-1} \mathbf{I}_{d_+ d_-} \in \mathbb{R}^{K \times d}, \\
	\mathbf{D}_k(p) & = (1-p) \text{diag} \left(\pi_1 \boldsymbol{\nu}_k^\top \mathbf{I}_{d_+ d_-} \boldsymbol{\nu}_1, \cdots, \pi_K \boldsymbol{\nu}_k^\top \mathbf{I}_{d_+ d_-} \boldsymbol{\nu}_K \right) \in (0, 1)^{K \times K}.
	\end{split}
	\end{equation}
	
	Recall that by Remark~\ref{remark:GRDPG-SBM}, we have $ \boldsymbol{\nu}_k^\top \mathbf{I}_{d_+ d_-} \boldsymbol{\nu}_\ell = \mathbf{B}_{k \ell} \in (0, 1) $ for all $ k, \ell \in \{ 1, \cdots, K \} $. Then we have $ \mathbf{D}_k(p) $ is positive-definite for any $ k \in \{1, \cdots, K \} $ and $ p \in (0, 1) $. For $ k, \ell \in \{1, \cdots, K \} $ and $ t \in (0, 1) $, let $ \boldsymbol{\Sigma}_{k\ell}(t) $ and $ \boldsymbol{\Sigma}_{k\ell}^{\prime}(t) $ denote the matrics as in Eq.~\eqref{eq:rhoapprox} corresponding to $ \mathbf{B} $ and $ \mathbf{B}^\prime $ respectively, i.e.,
	\begin{equation}
	\begin{split}
	\boldsymbol{\Sigma}_{k\ell}(t) & = t \boldsymbol{\Sigma}_k + (1-t) \boldsymbol{\Sigma}_\ell \in \mathbb{R}^{d \times d}, \\[1em]
	\boldsymbol{\Sigma}_{k\ell}^{\prime}(t) & = t \boldsymbol{\Sigma}_k^{\prime} + (1-t) \boldsymbol{\Sigma}_\ell^{\prime} \\
	& = t \left[p \boldsymbol{\Sigma}_k + \mathbf{V}^\top \mathbf{D}_k(p) \mathbf{V} \right] + (1-t) \left[p \boldsymbol{\Sigma}_\ell + \mathbf{V}^\top \mathbf{D}_\ell(p) \mathbf{V} \right] \\
	& = p \left[t \boldsymbol{\Sigma}_k + (1-t) \boldsymbol{\Sigma}_\ell \right] + \mathbf{V}^\top \left[t \mathbf{D}_k(p) + (1-t) \mathbf{D}_\ell(p) \right] \mathbf{V} \\
	& = p \boldsymbol{\Sigma}_{k\ell}(t) + \mathbf{V}^\top \mathbf{D}_{k \ell}(p, t) \mathbf{V} \in \mathbb{R}^{d \times d},
	\end{split}
	\end{equation}
	where
	\begin{equation}
	\mathbf{D}_{k \ell}(p, t) = t \mathbf{D}_k(p) + (1-t) \mathbf{D}_\ell(p) \in \mathbb{R}_+^{K \times K}.
	\end{equation}
	
	Recall that $ \mathbf{D}_k(p) $ and $ \mathbf{D}_\ell(p) $ are both positive-definite for any $ k, \ell \in \{1, \cdots, K \} $ and $ p \in (0, 1) $, thus $ \mathbf{D}_{k \ell}(p, t) $ is also positive-definite for any $ k, \ell \in \{1, \cdots, K \} $ and $ p, t \in (0, 1) $. Now by the Sherman-Morrison-Woodbury formula~\cite{Horn2012}, we have
	\begin{equation}
	\begin{split}
	\left[\boldsymbol{\Sigma}_{k\ell}^{\prime}(t) \right]^{-1} & = \left[p \boldsymbol{\Sigma}_{k\ell}(t) + \mathbf{V}^\top \mathbf{D}_{k \ell}(p, t) \mathbf{V} \right]^{-1} \\
	& = \frac{1}{p} \boldsymbol{\Sigma}_{k\ell}^{-1}(t) - \frac{1}{p^2} \boldsymbol{\Sigma}_{k\ell}^{-1}(t) \mathbf{V}^\top \left[\mathbf{D}_{k \ell}^{-1}(p, t) + \frac{1}{p} \mathbf{V} \boldsymbol{\Sigma}_{k\ell}^{-1}(t) \mathbf{V}^\top \right]^{-1} \mathbf{V} \boldsymbol{\Sigma}_{k\ell}^{-1}(t) \\
	& = \frac{1}{p} \boldsymbol{\Sigma}_{k\ell}^{-1}(t) - \frac{1}{p^2} \boldsymbol{\Sigma}_{k\ell}^{-1}(t) \mathbf{V}^\top \mathbf{M}_{k \ell}^{-1}(p, t)\mathbf{V} \boldsymbol{\Sigma}_{k\ell}^{-1}(t) \in \mathbb{R}^{d \times d},
	\end{split}
	\end{equation}
	where
	\begin{equation}
	\mathbf{M}_{k \ell}(p, t) = \mathbf{D}_{k \ell}^{-1}(p, t) + \frac{1}{p} \mathbf{V} \boldsymbol{\Sigma}_{k\ell}^{-1}(t) \mathbf{V}^\top  \in \mathbb{R}^{K \times K}.
	\end{equation}
	
	Recall that for any $ k, \ell \in \{1, \cdots, K \} $ and $ p, t \in (0, 1) $, $ \mathbf{D}_{k \ell}(p, t) $ and $ \boldsymbol{\Sigma}_{k\ell}(t) $ are both positive-definite, thus $ \mathbf{M}_{k \ell}(p, t) $ is also positive-definite. Then for any $ k, \ell \in \{1, \cdots, K \} $ and $ p,t \in (0, 1) $, we have
	\begin{equation}
	\label{eq:nuSigma}
	\begin{split}
	(\boldsymbol{\nu}_k^\prime - \boldsymbol{\nu}_\ell^\prime)^\top \left[\boldsymbol{\Sigma}_{k\ell}^{\prime}(t) \right]^{-1} (\boldsymbol{\nu}_k^\prime - \boldsymbol{\nu}_\ell^\prime) & = p (\boldsymbol{\nu}_k - \boldsymbol{\nu}_\ell)^\top \\
	& \quad \left[\frac{1}{p} \boldsymbol{\Sigma}_{k\ell}^{-1}(t) - \frac{1}{p^2} \boldsymbol{\Sigma}_{k\ell}^{-1}(t) \mathbf{V}^\top \mathbf{M}_{k \ell}^{-1}(p, t) \mathbf{V} \boldsymbol{\Sigma}_{k\ell}^{-1}(t) \right] \\
	& \quad (\boldsymbol{\nu}_k - \boldsymbol{\nu}_\ell) \\
	& = (\boldsymbol{\nu}_k - \boldsymbol{\nu}_\ell)^\top \boldsymbol{\Sigma}_{k\ell}^{-1}(t) (\boldsymbol{\nu}_k - \boldsymbol{\nu}_\ell) \\
	& \quad - \frac{1}{p} \mathbf{x}^\top \mathbf{M}_{k \ell}^{-1}(p, t) \mathbf{x} \\
	& = (\boldsymbol{\nu}_k - \boldsymbol{\nu}_\ell)^\top \boldsymbol{\Sigma}_{k\ell}^{-1}(t) (\boldsymbol{\nu}_k - \boldsymbol{\nu}_\ell) - h_{k \ell}(p, t),
	\end{split}
	\end{equation}
	where
	\begin{equation}
	\begin{split}
	\mathbf{x} & = \mathbf{V} \boldsymbol{\Sigma}_{k\ell}^{-1}(t) (\boldsymbol{\nu}_k - \boldsymbol{\nu}_\ell) \in \mathbb{R}^K, \\
	h_{k \ell}(p, t) & = \frac{1}{p} \mathbf{x}^\top \mathbf{M}_{k \ell}^{-1}(p, t) \mathbf{x}.
	\end{split}
	\end{equation}
	
	Recall that for any $ k, \ell \in \{1, \cdots, K \} $ and $ p, t \in (0, 1) $, $ \mathbf{M}_{k \ell}(p, t) $ is positive-definite, thus we have $ h_{k \ell}(p, t) > 0 $. Together with Eq.~\eqref{eq:nuSigma}, we have
	\begin{equation}
	t (1-t) (\boldsymbol{\nu}_k - \boldsymbol{\nu}_\ell)^\top \boldsymbol{\Sigma}_{k\ell}^{-1}(t) (\boldsymbol{\nu}_k - \boldsymbol{\nu}_\ell) > t (1-t) (\boldsymbol{\nu}_k^\prime - \boldsymbol{\nu}_\ell^\prime)^\top \left[\boldsymbol{\Sigma}_{k\ell}^{\prime}(t) \right]^{-1} (\boldsymbol{\nu}_k^\prime - \boldsymbol{\nu}_\ell^\prime).
	\end{equation}
	
	Thus for any $ k, \ell \in \{1, \cdots, K \} $, we have
	\begin{equation}
	\begin{split}
	C_{k ,\ell}(\mathbf{B}, \boldsymbol{\pi}) & =\sup_{t \in (0, 1)} \left[ t (1-t) (\boldsymbol{\nu}_k - \boldsymbol{\nu}_\ell)^\top \boldsymbol{\Sigma}_{k\ell}^{-1}(t) (\boldsymbol{\nu}_k - \boldsymbol{\nu}_\ell) \right], \\
	& > \sup_{t \in (0, 1)} \left[ t (1-t) (\boldsymbol{\nu}_k^\prime - \boldsymbol{\nu}_\ell^\prime)^\top \left[\boldsymbol{\Sigma}_{k\ell}^{\prime}(t) \right]^{-1} (\boldsymbol{\nu}_k^\prime - \boldsymbol{\nu}_\ell^\prime) \right] \\
	& = C_{k ,\ell}(\mathbf{B}^\prime, \boldsymbol{\pi}).
	\end{split}
	\end{equation}
	
	Let $ \rho_B $ and $ \rho_{B^\prime} $ denote the Chernoff information obtained as in Eq.~\eqref{eq:rhoapprox} corresponding to $ \mathbf{B} $ and $ \mathbf{B}^\prime $ respectively (with the same block assignment probability vector $ \boldsymbol{\pi} $). Then we have
	\begin{equation}
	\rho_{B} \approx \min_{k \neq l} C_{k ,\ell}(\mathbf{B}, \boldsymbol{\pi}) > \min_{k \neq l} C_{k ,\ell}(\mathbf{B}^\prime, \boldsymbol{\pi}) \approx \rho_{B^\prime}. 
	\end{equation}
	
	Thus we have $ \mathbf{B} \succ \mathbf{B}^\prime = p \mathbf{B} $ for $ p \in (0, 1) $.
\end{proof}

\begin{proof}[Proof of Corollary~\ref{cor:Chernoff-Superiority}.]
	By Eq.~\eqref{eq:B0} and Eq.~\eqref{eq:B1}, we have
	\begin{equation}
	\begin{split}
	\mathbf{B}_0 & = \frac{p_0}{p_0+p_1} \mathbf{B}_1, \\
	\mathbf{B}_1 & = (p_0 + p_1) \mathbf{B}.
	\end{split}
	\end{equation}
	
	Recall that $ p_0 \in (0, 1) $ and $ p_1 \in (0, 1-p_0) $. Then by Theorem~\ref{thm:Chernoff-Superiority}, we have $ \mathbf{B} \succ \mathbf{B}_1 \succ \mathbf{B}_0 $.
\end{proof}


\begin{backmatter}

\section*{Acknowledgements}
 This problem was posed to us by Adam Cardinal-Stakenas and Kevin Hoover.

\section*{Funding}
Cong Mu's work is partially supported by the Johns Hopkins Mathematical Institute for Data Science (MINDS) Data Science Fellowship.

\section*{Abbreviations}
\textbf{SBM}: Stochastic Blockmodel \\
\textbf{GRDPG}: Generalized Random Dot Product Graph \\
\textbf{ASE}: Adjacency Spectral Embedding \\
\textbf{LSE}: Laplacian Spectral Embedding \\
\textbf{GMM}: Gaussian Mixture Modeling \\
\textbf{BIC}: Bayesian Information Criterion \\
\textbf{ARI}: Adjusted Rand Index \\
\textbf{stderr}: Standard Error \\
\textbf{NDMG}: NeuroData’s Magnetic Resonance Imaging to Graphs

\section*{Availability of data and materials}
Social network datasets are available at \texttt{http://snap.stanford.edu/data/}.


\section*{Competing interests}
The authors declare that they have no competing interests.


\section*{Authors' contributions}
Cong Mu developed the theory, designed \& implemented the methods, conducted the experiments, and wrote the manuscript . Youngser Park implemented the methods, conducted the experiments, and edited the manuscript. Carey E. Priebe formulated the problem, designed the methods, developed the theory and edited the manuscript. All authors read and approved the manuscript.

\section*{Authors' information}


\bibliographystyle{bmc-mathphys} 
\bibliography{bmc_article}      







\end{backmatter}
\end{document}